\documentclass{article}

\usepackage[margin=1in]{geometry}

\usepackage{authblk}
\usepackage{cite}
\usepackage{microtype}
\usepackage{xspace}
\usepackage{enumitem}
\usepackage[linesnumbered]{algorithm2e}
\usepackage[dvipsnames]{xcolor}
\usepackage[pdf]{pstricks} 
\usepackage{amsmath}
\usepackage{amsthm}
\usepackage{amssymb}

\usepackage{auto-pst-pdf}
\usepackage{subcaption}

\newcommand{\ceil}[1]{\lceil #1 \rceil}
\renewcommand{\epsilon}{\varepsilon}
\renewcommand{\emph}[1]{\textit{\textbf{#1}}}
\newcommand{\conv}[1]{conv(#1)}
\newcommand{\ext}[1]{ext(#1)}
\newcommand{\depth}[1]{\mathrm{depth}(#1)}

\newtheorem{theorem}{Theorem}

\newtheorem{lemma}[theorem]{Lemma}

\newtheorem{corollary}[theorem]{Corollary}
\newtheorem{observation}[theorem]{Observation}

\begin{document}

\title{Streaming Algorithms for Planar Convex Hulls}


\author[1]{Mart\'{\i}n Farach-Colton\footnote{
This research was supported in part by NSF CCF 1637458, NIH 1 U01 CA198952-01, a NetAPP Faculty Fellowship and a gift from Dell/EMC.
}}
\author[1]{Meng Li}
\author[2]{Meng-Tsung Tsai\footnote{
This research was supported in part by the Ministry of Science and Technology of Taiwan under contract MOST grant 107-2218-E-009- 026-MY3, and the Higher Education Sprout Project of National Chiao Tung University and Ministry of Education (MOE), Taiwan.
}}

\affil[1]{Rutgers University, Piscataway, USA \protect\\
  \texttt{\{farach, ml910\}@cs.rutgers.edu} }
\affil[2]{National Chiao Tung University, Hsinchu, Taiwan \protect\\
	\texttt{mtsai@cs.nctu.edu.tw} }

\date{}

\maketitle

\thispagestyle{empty}

\begin{abstract}

  Many classical algorithms are known for computing the convex hull of a
  set of $n$ point in $\mathbb{R}^2$ using $O(n)$ space.  For large
  point sets, whose size exceeds the size of the working space, these
  algorithms cannot be directly used. The current best streaming
  algorithm for computing the convex hull is computationally expensive, 
	because it needs to solve a set of linear programs.

  In this paper, we propose simpler and faster streaming and W-stream
  algorithms for computing the convex hull. Our streaming algorithm
  has small pass complexity, which is roughly a square root of the
  current best bound, and it is simpler in the sense that our algorithm
	mainly relies on computing the convex hulls of smaller point sets. 
	Our W-stream algorithms, one of which
  is deterministic and the other of which is randomized, have
  nearly-optimal tradeoff between the pass complexity and space usage,
  as we established by a new unconditional lower bound.

\end{abstract}

\smallskip
\textbf{Keywords.} Convex Hulls, Streaming Algorithms, Lower Bounds

\clearpage

\pagenumbering{arabic}

\section{Introduction}\label{sec:intro}

The \emph{convex hull} of a set $P$ of points in $\mathbb{R}^2$ is the smallest convex set that contains $P$. We denote the convex hull of $P$ by $\conv{P}$ and denote the set of extreme points in $\conv{P}$ by $\ext{P}$. Let $n = |P|$ and $h = |\ext{P}|$. Note that $h \le n$ because $\ext{P}$ is a subset of $P$. By computing the convex hull of $P$, we mean outputting the points in $\ext{P}$ in clockwise order.

There is a long line of research on computing the convex hull using $O(n)$ space. In the RAM model, Graham~\cite{g72} gave the first algorithm, called the {\it Graham Scan}, with running time $O(n \log n)$. Subsequently, several algorithms were devised with the same running time, but with different approaches~\cite{ph77,a79,k84,bdh96}. In the output-sensitive model, where the running time depends on $n$ and $h$,
Jarvis~\cite{j73} proposed the {\it Gift Wrapping} algorithm, which has running time $O(nh)$. This algorithm was later improved by Kirkpatrick and Seidel~\cite{ks86} and Chan~\cite{c96}, both of which achieve  running time of $O(n \log h)$. In the online model, where input points are given one by one and algorithms need to compute the convex hull of points seen so far, Overmars and van Leeuween's algorithm~\cite{ol81} can update the convex hull in $O(\log^2 n)$ time per incoming point. Brodal and Jacob~\cite{bj02} reduced the update time to $O(\log n)$.

\noindent\textbf{Streaming Model.} The algorithms mentioned above all require $s = \Omega(n)$ working space (memory) in the worst case. Therefore, none of these can handle the case when $s \ll n$, that is, when either $n$ is very large (a massive data set) or $s$ is very small (such as in embedded systems). In order to explore the convex hull problem with such a memory restriction, we consider the standard streaming models~\cite{ruhlthesis03,bbdmw02,cia06,muthu06,cbga10}, where the $n$ given points are stored on a read-only or writable tape in an arbitrary order. If the tape is read-only, then the model is simply called the \emph{streaming model}~\cite{bbdmw02,muthu06}. Otherwise the tape is writable, and the model is called the \emph{W-stream model}~\cite{ruhlthesis03,cia06,cbga10}. We refer to algorithms in the streaming model as \emph{streaming algorithms} and algorithms in the W-stream model as \emph{W-stream algorithms}.
In both models, algorithms can manipulate the working space while reading the points sequentially from the beginning of the tape to the end; however, only algorithms in the W-stream model can modify the tape, detailed in Section~\ref{sec:wrstr}. Hence, algorithms in this model cannot access the input randomly, which is different from the model for in-place algorithms~\cite{Bronn04,Bose07}.
The extreme points are written to a write-only stream. The \emph{pass complexity} of an algorithm refers to the number of times the algorithms scans the tape from the beginning to the end. The goal is to devise streaming and W-stream algorithms that have small pass and space complexities. 

No single-pass streaming algorithm can compute the convex hull using $o(n)$ space because it is no easier than sorting $n$ positive numbers in $\mathbb{R}$. Since sorting $n$ numbers using $s$ spaces requires $\Omega(n/s)$ passes~\cite{mp78}, computing the convex hull in a single pass requires linear space.  However, Chan and Chen~\cite{Chan07} showed that the space requirement can be significantly reduced if multi-pass algorithms are allowed. Specifically, their streaming algorithm uses $O(\delta^{-2})$ 
passes, $O(\delta^{-2} h n^\delta)$ space, and $O(\delta^{-2} n \log n)$ time for any constant $\delta \in (0, 1)$. On the other hand, to have small space complexity, one can appeal to a general scheme to convert PRAM algorithms to W-stream algorithms established by Demetrescu et al.~\cite{cbga10}, summarized in Section~\ref{sec:wrstr}. Using this technique yields a W-stream algorithm that uses $O((n/s) \log h)$ 
passes and $O(s)$ space where $s$ can be as small as constant. 

\smallskip

\noindent\textbf{Our Contribution.} We devise a new $O(n \log h)$-time RAM algorithm to compute the convex hull (Section~\ref{sec:pre}). Then, we adapt the RAM algorithm to both models. 

In the streaming model, the pass complexity of our algorithm is roughly a square root of that of Chan and Chen's algorithm~\cite{Chan07} if both algorithms have the same space usage. Specifically, we have:

\begin{theorem}\label{thm:read-only}
Given a set $P$ of $n$ points in $\mathbb{R}^2$ on a read-only tape where $|\ext{P}| = h$, there exists a deterministic streaming algorithm to compute the convex hull of $P$ in $O(\delta^{-1})$ passes using $O(\min\{\delta^{-1} h n^\delta \log n, n\})$ space and $O(\delta^{-2} n \log n)$ time for every constant $\delta \in (0, 1)$.
\end{theorem}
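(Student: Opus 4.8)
The plan is to adapt a standard divide-and-conquer / marriage-before-conquest style convex hull algorithm to the streaming setting, using a tournament over roughly $1/\delta$ levels. First I would partition the job into computing the upper hull and the lower hull separately; by symmetry it suffices to describe the upper hull. The key idea is to break the plane into $t = n^{\delta}$ vertical slabs and, recursively, to break each slab into $n^{\delta}$ sub-slabs, so that the slab hierarchy has depth $1/\delta$. At the bottom level a slab is narrow enough that the number of input points falling in it, times the information we need to keep, fits within the space budget; there we can run the RAM algorithm from Section~\ref{sec:pre} directly. The output of processing a slab is the portion of the upper hull of $P$ restricted to that slab (equivalently, the upper hull of the points in the slab, trimmed to the bridges that survive globally), which has at most $h$ vertices in total across all slabs at a fixed level.

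The main engine is a \emph{bridge-finding} subroutine: given a vertical line $\ell$ separating a left point set from a right point set (each already summarized as a partial upper hull), find the upper common tangent (the ``bridge'') crossing $\ell$. In the classical RAM algorithm this is done by a prune-and-search in $O(n)$ time, but in the streaming model I instead want to compute bridges by a single pass that, for each of the $t-1$ internal slab boundaries at the current level, maintains the relevant candidate points. The crucial observation is that a point of $P$ can be discarded as soon as we know it lies strictly below some bridge; so in one pass we feed the points through, updating for each slab boundary the best bridge endpoint seen so far on each side, and at the end of the pass we have all $t-1$ bridges at that level, hence the upper hull vertices in each slab that are ``locally extreme.'' Iterating the hierarchy from the coarsest level down, each level costs $O(1)$ passes (a constant number to recompute slab boundaries by an approximate-median-type selection, plus one to compute bridges), giving $O(\delta^{-1})$ passes overall. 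Within a slab at level $i$ we store $O(h)$ hull vertices plus $O(t)$ slab-boundary bookkeeping; summing the $\log n$ factor that comes from storing exact coordinates and the $n^{\delta}$ branching, the space is $O(\delta^{-1} h n^{\delta} \log n)$, and we cap it at $O(n)$ since we can always fall back to the RAM algorithm once the budget exceeds $n$. The time bound $O(\delta^{-2} n \log n)$ follows since each of the $O(\delta^{-1})$ passes does $O(n)$ work times the $O(\delta^{-1}\log n)$ cost of the within-pass selection/merge steps.

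Two places need care. First, the slab boundaries must be chosen so that each sub-slab receives a controlled fraction of the points \emph{and} a controlled fraction of the hull vertices; a naive balanced split by point count need not balance hull complexity, so I would choose boundaries greedily to split the current partial-hull vertex count, and argue that after $1/\delta$ levels each leaf slab has $O(h n^{\delta}/n) \cdot \text{stuff}$... more precisely $O(h^{1/(\text{depth})})$-type counts — this counting argument, making the recursion's space telescoping cleanly to $n^{\delta}$, is the step I expect to be the main obstacle, and where the $\log n$ factor sneaks in (we keep a super-sampled set of hull vertices, not just $h$ of them, to guarantee progress in subsequent passes). Second, one must verify that bridges computed at a coarse level remain valid at finer levels — i.e., a point discarded because it is below a level-$i$ bridge is still below the final hull — which follows because bridges only move upward as we refine, so discarding is monotone and safe. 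Finally, to output the hull in clockwise order, I concatenate the upper hull (left to right) with the reversed lower hull, each produced slab-by-slab in sorted slab order, which requires no extra pass.
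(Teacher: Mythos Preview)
Your outline is essentially the Chan--Chen vertical-slab scheme, not the paper's algorithm, and the step that would give you $O(\delta^{-1})$ passes is the one that does not go through. You write that in a single pass you ``feed the points through, updating for each slab boundary the best bridge endpoint seen so far on each side, and at the end of the pass we have all $t-1$ bridges.'' But the bridge at a vertical line $\ell$ is the hull edge crossing $\ell$; its two endpoints are determined jointly by all of $P$, and finding it is (via point--line duality) a two-dimensional linear program whose optimum slope you do not know in advance. There is no per-point local update that maintains ``the best bridge endpoint on each side'': whether a left-side point is a bridge endpoint depends on the entire right-side set and vice versa. Chan and Chen handle this with a streaming LP solver costing $O(\delta^{-1})$ passes per level, which is exactly why their total is $O(\delta^{-2})$ passes. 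Your proposal asserts away this cost.

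The paper gets $O(\delta^{-1})$ passes by avoiding bridges altogether. Instead of splitting by $r$ vertical lines and searching for the crossing hull edges, it partitions $P$ into groups of size $r{+}1$, computes each group's upper hull, takes the multiset $Q$ of hull-edge slopes, and picks the $r$ quantile slopes $\sigma_1,\ldots,\sigma_r$ of $Q$ (via a Greenwald--Khanna summary, one pass). For each $\sigma_k$ the supporting extreme point $s_k\in P$ is then found in one more pass: since $\sigma_k$ is known, this is a one-dimensional maximization, not an LP. The cascade-pruning lemma (Lemma~\ref{lem:cascade-pruning}) together with the quantile choice yields $|P_k|\le 2|P|/(r{+}1)$ (Lemma~\ref{lem:balanced}), so the recursion depth is $O(\log_r n)=O(\delta^{-1})$ with $r=n^\delta$, and each BFS level costs $O(1)$ passes. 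This also dissolves the balancing difficulty you flag as your ``main obstacle'': with slope-based splitting the size bound on subproblems is automatic, whereas your proposed greedy split by hull-vertex count has no stated guarantee and the sketch you give (``$O(h^{1/\text{depth}})$-type counts'') is not an argument. In short, the missing idea is to split by slope quantiles rather than by vertical lines; without it you recover Chan--Chen's $O(\delta^{-2})$ passes, not the claimed $O(\delta^{-1})$.
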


In the W-stream model, we adapt the RAM algorithm to two W-stream algorithms. One uses $O(s)$ space for any $s = \Omega(\log n)$ and the other uses $O(s)$ space for any $s = \Omega(1)$. The pass complexity of our W-stream algorithms are $O(\ceil{h/s}\log n)$ and $O(h/s + \log n)$, which are smaller than $O((n/s)\log h)$, the best pass complexity among those W-stream algorithms that are converted from PRAM algorithms in algebraic decision tree model~\cite{cbga10}, when $s \le h$.  

The first W-stream algorithm is deterministic, and we get:  

\begin{theorem}\label{thm:det}
Given a set $P$ of $n$ points in $\mathbb{R}^2$ where $|\ext{P}| = h$, there exists a deterministic W-stream algorithm to compute the convex hull of $P$ in $O(\lceil h/s \rceil \log n)$ passes using $O(s)$ space and $O(n\log^2 n)$ time 
for any $s = \Omega(\log n)$.
\end{theorem}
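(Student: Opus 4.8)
The plan is to run, directly on the writable tape, a recursive version of the hull algorithm of Section~\ref{sec:pre}, exploiting the two powers of the W-stream model: we may rewrite the tape (so that, after a preprocessing pass, it holds the hulls of blocks of $\Theta(s)$ consecutive points, each of which fits in internal memory, together with the leftmost and rightmost points), and we may delete points that are provably no longer needed. Compute the upper and lower hulls separately and merge them with one extra pass; fix attention on the upper hull. As in marriage-before-conquest, a \emph{subproblem} is a pair $(u,v)$ of currently consecutive, already-reported hull vertices together with the points of $P$ lying strictly above the segment $uv$; to process it one finds the \emph{bridge}, i.e. the upper-hull edge crossing the median abscissa of those points, reports its two endpoints, and replaces the subproblem by the two smaller subproblems it induces, discarding everything on or below the bridge. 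The skeleton is to run this recursion level by level, each level rewriting the surviving points contiguously onto the tape grouped by subproblem (with $O(s)$ interval separators kept in memory); what remains is to find many bridges per pass while keeping total space $O(s)$.

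For the bridge computations I would process the subproblems of a level in batches of at most $s$. Every subproblem with at most $s$ points is handled during a single shared pass: as the tape is scanned, each such subproblem's points are loaded into memory, its hull is computed there, and its bridge is read off. A subproblem with more than $s$ points is handled by a one-parameter search for the slope of its bridge: a binary search over the $O(\log n)$-bit range of candidate slopes (differences of input coordinates), where each probe is one pass that computes, with $O(1)$ working cells, the point maximizing $y-\sigma x$ over that subproblem and compares its abscissa with the dividing line to decide which half of the range to keep; after $O(\log n)$ probes the bridge is pinned down. Because this search needs only $O(1)$ cells per subproblem, up to $s$ of them run simultaneously, sharing the same $O(\log n)$ passes over the tape within $O(s)$ total space; within a pass the subproblem owning a scanned point is identified from its abscissa via the in-memory list of subproblem intervals.

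For the accounting, recall from the analysis in Section~\ref{sec:pre} that the recursion produces exactly one bridge per upper-hull edge, hence $\Theta(h)$ bridges in all, that level $d$ holds $O(\min(2^d,h))$ subproblems, and that the median split halves a subproblem's point count from level to level, so after $O(\log n)$ levels every subproblem has at most $s$ points. Summing $\ceil{\min(2^d,h)/s}$ over the levels and charging the $O(\log n)$ passes spent per batch of $s$ subproblems, while arguing that the levels carrying fewer than $s$ active (large) subproblems contribute only $O(\ceil{h/s})$ batches overall, should give $O(\ceil{h/s}\log n)$ passes. For the running time, the preprocessing pass costs $O(n\log n)$, each pass of a batch costs at most the current tape length, and the tape length is $\Theta(n)$ only for the shallow levels and then shrinks geometrically (points below bridges are deleted and each $\Theta(s)$-block is replaced by its hull); telescoping yields $O(n\log^2 n)$. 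The lower hull is symmetric.

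The step I expect to be the real obstacle is making the batching simultaneously pass-efficient and space-efficient. Space forces the per-subproblem state of a bridge search down to $O(1)$ words, hence the search must range over the discrete slope set with single-pass, memoryless probe evaluations; and the pass count requires that the $\log n$ factor of a bridge search be paid once per batch of $s$ subproblems, which in turn requires controlling the shape of the recursion tree so that the large subproblems do not spread over too many levels — this is exactly what must be pinned down to get $O(\ceil{h/s}\log n)$ passes rather than $O((\ceil{h/s}+\log n)\log n)$, and it is the place where the structure of the Section~\ref{sec:pre} recursion has to be used, not just invoked. The degenerate cases — ties among abscissae, a bridge coinciding with the base segment so that a subproblem dies at once, and the final batch when fewer than $s$ extreme points remain — are routine but must be spelled out.
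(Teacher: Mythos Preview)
Your route differs from the paper's in its core primitive. You work with the bridge-finding recursion (Kirkpatrick--Seidel, the top-left cell of Table~\ref{tab:4algo}), whereas the paper's W-stream algorithm is the $r=1$ case of the RAM algorithm of Section~\ref{sec:pre}: rather than locating the hull edge that crosses a median \emph{abscissa}, each subproblem finds the extreme point supporting an approximate median \emph{slope}, where the slopes are those of arbitrary pairs of input points. The approximate median is obtained in a single pass by a Greenwald--Khanna quantile summary using $O(\log n)$ space, and the supporting point is then found in one more pass with $O(1)$ space. Hence every subproblem costs $O(\log n)$ space and $O(1)$ passes; a batch holds $\Theta(s/\log n)$ subproblems and still costs $O(1)$ passes, and summing $\lceil h_d/(s/\log n)\rceil$ over the $O(\log n)$ levels gives $O(h\log n/s+\log n)=O(\lceil h/s\rceil\log n)$ immediately.

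Your proposal has two genuine gaps. First, the binary search ``over the $O(\log n)$-bit range of candidate slopes'' presupposes bounded-precision coordinates; the paper operates in RealRAM, where slopes are arbitrary reals and no bit-level search is available (this is exactly why the paper avoids the LP/bridge formulation and uses a comparison-only quantile summary instead). Second, even granting an $O(\log n)$-pass, $O(1)$-space bridge oracle, your accounting lands at $O((h/s+\log n)\log n)$, not $O(\lceil h/s\rceil\log n)$: each of the $O(\log n)$ levels forces at least one batch, and each batch costs $\Theta(\log n)$ passes, so the underfull batches alone contribute $\Theta(\log^2 n)$. The hope you flag---that levels with fewer than $s$ large subproblems contribute only $O(\lceil h/s\rceil)$ batches in total---fails when $h<s$, since then every level is underfull yet each still needs its own batch (level $d{+}1$ cannot begin until level $d$ finishes). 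The paper's trade of $O(\log n)$ space per subproblem for $O(1)$ passes per batch is precisely what collapses the underfull-batch term from $\Theta(\log^2 n)$ to $O(\log n)$, and it does so without any precision assumption.
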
 

Next, we randomize the above W-stream algorithm. A logarithmic factor can be shaved off from the pass complexity w.h.p.\footnote{w.h.p. means with probability $1-1/n^{\Omega(1)}$.} We have:

\begin{theorem}\label{thm:rand}
Given a set $P$ of $n$ points in $\mathbb{R}^2$ where $|\ext{P}| = h$, there exists a randomized W-stream algorithm to compute the convex hull of $P$ in $p$ passes using $O(s)$ space and $O(n\log^2 n)$ time 
for any $s = \Omega(1)$, where $p = O(h/s + \log n)$ w.h.p.
\end{theorem}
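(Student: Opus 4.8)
The plan is to take the deterministic W-stream algorithm of Theorem~\ref{thm:det} and replace its ``peel off $s$ extreme points per batch of $O(\log n)$ passes'' strategy by a randomized one that discards a constant fraction of the surviving points in each phase, thereby trading the $\lceil h/s\rceil\log n$ bound for $O(h/s+\log n)$. The key observation is the standard one behind randomized incremental / Chan-style algorithms: if we pick a random sample $R$ of size $\Theta(s)$ from the points currently on the tape and compute $\conv{R}$ (which fits in $O(s)$ space and costs $O(\log n)$ passes by the deterministic subroutine, or $O(1)$ passes if we are content to read $R$ into memory), then every remaining point that lies strictly inside $\conv{R}$ can never be an extreme point of $P$ and is deleted from the tape. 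The number of surviving points outside $\conv{R}$ is, in expectation and w.h.p., $O((n'/s)\cdot h')$ where $n'$ is the current tape length and $h'$ the number of extreme points of the current point set $-$ more usefully, a Clarkson$-$Shor / backwards-analysis argument shows that a random sample of size $s$ of the $h'$ extreme points leaves only $O(h' \log h'/s)$ ``uncovered'' extreme points, so after one phase the number of extreme points drops by a constant factor once $s$ is below $h'$, and the tape itself also shrinks geometrically. I would make the sampling reservoir-style so it costs one pass and $O(s)$ space regardless of $n'$.

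The steps, in order, are: (1) describe one phase$-$reservoir-sample $\Theta(s)$ points in one pass, compute their hull in $O(s)$ working space (invoking the RAM algorithm of Section~\ref{sec:pre} in-core, or the deterministic W-stream hull subroutine), in $O(1)$ further passes mark-and-delete every tape point inside that hull and simultaneously output any sampled hull vertex that is confirmed to be a vertex of $P$ (a point is confirmed once we know no other point sees past it; this is the same certification used in Theorem~\ref{thm:det}); (2) prove the shrinkage lemma: after one phase, with probability $1-n^{-\Omega(1)}$ both the tape length and the number of surviving extreme points decrease by a constant factor as long as the current extreme-point count exceeds $cs$ for a suitable constant $c$$-$here I would pick the sample size large enough (an extra $\Theta(\log n)$ factor, absorbed into the constant in $\Theta(s)$, or a union bound over $O(\log n)$ phases) to boost the per-phase success probability to $1-n^{-\Omega(1)}$; (3) bound the number of phases: while $h' > cs$ the extreme-point count halves, giving $O(\log(h/s))$ ``shrinking'' phases, and once $h' \le cs$ a single final phase with a sample of size $\Theta(s)$ captures all remaining extreme points and finishes; (4) account the cost: each phase is $O(1)$ passes for the sampling/deletion plus the cost of one hull-of-$\Theta(s)$-points computation. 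If $s=\Omega(\log n)$ the latter is $O(1)$ passes (read into memory), so the total is $O(\log(h/s)) = O(h/s+\log n)$ passes; if $s=o(\log n)$ we run the deterministic subroutine to hull the sample, costing $O((s/s)\log n)=O(\log n)$ passes per phase, but then there are only $O(\log(h/s))=O(\log n)$ phases in the regime $h>\mathrm{poly}\log n$ while for small $h$ the deterministic algorithm of Theorem~\ref{thm:det} already gives $O(\lceil h/s\rceil \log n)$; combining the two cases and simplifying yields the claimed $p=O(h/s+\log n)$ w.h.p.; (5) the time bound $O(n\log^2 n)$ follows because the total tape volume read across all phases is $O(n)$ by the geometric shrinkage, and each point is processed with $O(\log n)$ overhead as in Theorem~\ref{thm:det}, with the extra $\log n$ from the in-core hull computations.

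The main obstacle, and the place I would spend the most care, is step~(2): showing that one phase shrinks the \emph{extreme-point count} (not merely the tape) by a constant factor w.h.p. The subtle point is that deleting interior points of $\conv{R}$ removes non-extreme points of $P$, which is ``free'' but does not by itself reduce $h$; the reduction in $h$ comes from the fact that the sampled hull $\conv{R}$ already \emph{contains} $\Theta(s)$ of the true extreme points on its boundary, which we certify and output, so they leave the problem. Making this rigorous requires a backwards-analysis (Clarkson$-$Shor) bound of the form: for a random $s$-subset $R$ of a point set with $h'$ extreme points, the expected number of extreme points of $P$ that are \emph{not} vertices of $\conv{R}$ and are \emph{not} interior to $\conv{R}$ (i.e.\ survive to the next phase) is $O(h'\log h' / s)$$-$this is where one must be careful that the ``conflict'' structure (an extreme point of $P$ conflicts with the sample if it lies outside the sample's hull) has bounded degree in the appropriate sense, so that the sampling lemma applies. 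A secondary, more routine obstacle is driving the failure probability down to $n^{-\Omega(1)}$ uniformly over all $O(\log n)$ phases, which is handled by a Chernoff bound on the sample size together with a union bound, at the cost of only constant factors in $s$ and in the pass count.
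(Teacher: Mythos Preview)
Your approach is genuinely different from the paper's, and the place you yourself flagged as the main obstacle is indeed a real gap. The paper does not sample points at all. Its randomized algorithm is the deterministic W-stream algorithm of Theorem~\ref{thm:det} (the RAM recursion with $r=1$, executed on $T_C$ level by level in subrounds) with one change: the Greenwald--Khanna summary that each subproblem $S_z$ uses to find an approximate median slope in $Q_z$ is replaced by a single uniformly random slope drawn from $Q_z$, cutting the per-subproblem space from $\Theta(\log n)$ to $\Theta(1)$. By Bhattacharya and Sen, random pivots keep $\depth{T_C}=O(\log n)$ w.h.p., so each subround can now host $\Theta(s)$ subproblems instead of $\Theta(s/\log n)$, and the pass count becomes $\sum_{d}\lceil h_d/\Theta(s)\rceil=O(h/s+\log n)$ for any $s=\Omega(1)$.

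In your scheme, every extreme point of $P$ lies on or outside $\conv{R}$ for any $R\subseteq P$, so the deletion step never removes one; $h'$ can drop only by the number of sampled points you certify as extreme in $P$. A uniform $\Theta(s)$-sample of the $n'$ tape points contains in expectation only $\Theta(sh'/n')$ true extreme points, not $\Theta(s)$, so when $h'\ll n'$ a phase outputs almost nothing and $h'$ does not halve. The Clarkson--Shor-style bound you invoke, ``$O(h'\log h'/s)$ uncovered extreme points,'' would require sampling from the $h'$ extreme points themselves, which you do not have access to. Nor does the tape shrink geometrically in general: with $h$ points on a unit circle and $n-h$ points on a concentric circle of radius $1-\Theta(1/h^{2})$, every interior point has halfplane depth $O(1)$, so a size-$s$ sample with $s\ll n$ leaves a $1-O(s/n)$ fraction of all points outside $\conv{R}$. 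Reaching $n'=O(h)$ then already costs $\Omega((n-h)/s)$ phases, which asymptotically exceeds $h/s+\log n$ whenever $h=o(n)$ and $s=o(n/\log n)$.
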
 

We prove that our W-stream algorithms have nearly-optimal tradeoff between pass and space complexities by showing Theorem~\ref{thm:lower}, which generalizes Guha and McGregor's lower bound (Theorem~8 in~\cite{Guha08}). We remark that this lower bound is sharp because it matches the bounds of our randomized W-stream algorithm when $h = \Omega(s \log n)$. 

\begin{theorem}\label{thm:lower}
Given a set $P$ of $n$ points in $\mathbb{R}^2$ where $|\ext{P}| = h = \Omega(1)$, any streaming (or W-stream) algorithm that computes the convex hull of $P$ with success rate $\ge 2/3$, and uses  $s$ bits requires $\Omega(\ceil{h/s})$ passes. 
\end{theorem}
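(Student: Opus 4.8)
The plan is to show that reporting the convex hull in clockwise order is at least as hard as sorting, and then to invoke the known multi-pass lower bounds for sorting, but tracking the parameter $h$ rather than the input size $n$. I would reduce from the problem of sorting $m=h-2$ distinct integers $a_1,\dots,a_m\in\{1,\dots,h^{10}\}$ presented in adversarial order. Encode $a_i$ as a point $q_i=(a_i,\,H-a_i^2)$ on a shallow downward arc (with $H$ a large fixed integer), and add $O(1)$ fixed ``frame'' points forming a large triangle whose two lower vertices lie far below and whose apex is chosen to lie strictly inside $\conv{\{q_1,\dots,q_m\}\cup\{\text{the two lower vertices}\}}$. Then $\ext{P}$ consists of exactly the two lower frame vertices together with $q_1,\dots,q_m$, so $|\ext{P}|=h$ for \emph{every} input, and the clockwise listing of $\ext{P}$ traverses the $q_i$ consecutively in order of $a_i$, hence recovers the sorted order of $a_1,\dots,a_m$. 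The remaining $n-h$ input points are all placed at one fixed location deep inside the triangle, so they are interior and cannot be extreme. Consequently, any $p$-pass, $s$-bit streaming (or W-stream) algorithm that outputs $\ext{P}$ in clockwise order, when fed this instance, sorts $m=\Theta(h)$ integers over a length-$n$ tape in $p$ passes using $s$ bits of working memory (and with success probability $\ge 2/3$ if the algorithm is randomised).

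For the read-only streaming model this completes the argument: sorting $m$ numbers with $s$ bits of memory requires $\Omega(m/s)$ passes~\cite{mp78}, and inserting the fixed, interior filler points cannot help, so the algorithm makes $\Omega(h/s)$ passes whenever $h\ge s$; when $h<s$ we have $\ceil{h/s}=1$ and at least one pass is needed just to read off the $h$ extreme points, so Theorem~\ref{thm:lower} holds in every regime. For the W-stream model the same reduction goes through, provided the pass lower bound for sorting survives a \emph{writable} tape; this is precisely the form of lower bound established by Guha and McGregor's Theorem~8 in~\cite{Guha08} (stated there in a quantile-estimation form, which for $\Theta(n)$ quantiles is equivalent to sorting), and Theorem~\ref{thm:lower} is its generalisation in which the $h$ extreme points produced by the gadget play the role of the sorted values. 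Finally, since the hard distribution makes the sorted output of the $\Theta(h)$ arc points essentially uniform over a set of size $h^{\Omega(h)}$, the $\Omega(h\log h)$ bits of output entropy force $\Omega(h/s)$ passes even for constant error probability, so the success-rate-$2/3$ version follows as well.

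The technical heart, and the main obstacle, is making the geometric gadget airtight while respecting the model. Several conditions must hold simultaneously: all coordinates are integers of $O(\log n)$ bits (so the instance is admissible and each point occupies $O(\log n)$ bits of tape); exactly $h$ of the $n$ points are extreme \emph{for every} permutation $a_1,\dots,a_m$ (which is why the arc values are drawn from the large range $\{1,\dots,h^{10}\}$, and why $H$ and the frame are fixed so that the apex is always engulfed and the filler always strictly interior); and the clockwise order of $\ext{P}$ is a genuine permutation of the arc values --- not a fixed sequence --- so that recovering it really is a sorting task. The second and more delicate point is to confirm that the W-stream argument of~\cite{Guha08} transfers unchanged through the reduction: the gadget must be \emph{oblivious}, in that the frame and filler points encode nothing about the input ordering, so that the freedom to rewrite the tape gives the convex-hull algorithm no leverage beyond what that argument already forbids for sorting. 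Both steps are routine in spirit, but they are where the actual work sits; everything else is bookkeeping.
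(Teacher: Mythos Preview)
Your approach differs from the paper's: you reduce from sorting, whereas the paper reduces directly from two-party set disjointness. The paper's gadget places Alice's set as points $a_i$ on an outer half-circle and Bob's set as points $b_i$ on a slightly smaller concentric half-circle, chosen so that $b_i$ is extreme iff $a_i$ is absent (Lemma~\ref{lem:geo}); hence the count $|\ext{P}|$ decides disjointness. A $p$-pass $s$-bit algorithm---streaming or W-stream---is simulated by an $O(ps)$-bit two-party protocol (each party runs the algorithm on its portion of the tape and ships the $s$-bit memory state at the boundary), and this simulation is insensitive to whether the tape is writable. Because set disjointness already carries an $\Omega(n)$ \emph{randomized} communication lower bound~\cite{ks92}, one obtains $ps=\Omega(h)$ for randomized algorithms in both models at once. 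A four-copy trick with complemented inputs $(A,B),(\bar A,B),(A,\bar B),(\bar A,\bar B)$ on four eighth-circles pins $|\ext{P}|$ to a fixed value independent of the instance.

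Your route could in principle be completed, but as written it has genuine gaps at precisely the two places the paper's argument handles for free. First, Munro--Paterson~\cite{mp78} is a deterministic, comparison-model bound; it does not yield the success-rate-$\ge 2/3$ statement, and the sentence ``output entropy forces $\Omega(h/s)$ passes'' is not a proof---entropy of the output written to a separate write-only stream does not by itself lower-bound the number of passes of a randomized algorithm. Second, for the W-stream case you defer entirely to Guha--McGregor~\cite{Guha08}, but their pass-elimination technique is for read-only streams; you have not shown that an $\Omega(m/s)$-pass sorting (or quantile) lower bound survives a writable tape, and this is exactly the nontrivial step. Indeed, the cleanest way to prove a randomized W-stream lower bound for sorting is itself via a set-disjointness embedding, at which point your reduction becomes a detour through sorting back to the paper's direct argument.
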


We note here that space is measured in terms of bits for lower bounds and in terms of points for upper bounds. This asymmetry is a common issue for geometric problems because most geometric problems are analyzed under the RealRAM model, where precision of points (or other geometric objects) is unbounded.

\smallskip

\noindent\textbf{Applications.} Our W-stream algorithms can handle the case for $s \le h$ because it outputs extreme points on the fly.  This output stream can be used as an input stream for another streaming algorithm, such as for diameter~\cite{shamosthesis78} and minimum enclosing rectangle~\cite{t83},
both of which rely on Shamos' rotating caliper method~\cite{shamosthesis78}. We apply Theorems~\ref{thm:det} and~\ref{thm:rand} to show Corollary~\ref{cor:diam}, detailed in Section~\ref{sec:proof}. 
\begin{corollary}\label{cor:diam}
Given a set $P$ of $n$ points in $\mathbb{R}^2$ where $|\ext{P}| = h$, there exists a deterministic W-stream algorithm to compute the diameter and minimum enclosing rectangles of $P$ in $O(\ceil{h/s}\log n)$ passes using $O(s)$ space and $O(n \log^2 n)$ time for every $s = \Omega(\log n)$. Given randomness, the pass complexity can be reduced to $O(h/s+ \log n)$ w.h.p. 
\end{corollary}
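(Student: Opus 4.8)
The plan for Corollary~\ref{cor:diam} is to obtain both quantities by a two-stage composition in the W-stream model. First run the convex-hull algorithm of Theorem~\ref{thm:det} (or, for the randomized bound, Theorem~\ref{thm:rand}), which writes the $h = |\ext{P}|$ vertices of $\ext{P}$ onto the tape in clockwise order; since the tape is writable these vertices then serve as the input to a second stage, a streaming implementation of Shamos' rotating-calipers method~\cite{shamosthesis78} for the diameter and of Toussaint's flush-edge variant~\cite{t83} for the minimum enclosing rectangle. Because the first stage emits exactly a convex polygon listed in cyclic order---precisely the input these classical $O(h)$-time procedures consume---correctness of the second stage is inherited directly from~\cite{shamosthesis78,t83}, and it suffices to bound its passes, space, and time.

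The core of the argument is that the rotating-calipers stage can be run in $O(\ceil{h/s})$ passes and $O(s)$ space on a tape of $h$ vertices. The structural fact to exploit is that every support line the method maintains moves \emph{monotonically} around $\ext{P}$: for the diameter, a pair of antipodal support lines each making one revolution; for the enclosing rectangle, four mutually perpendicular support lines, each again sweeping the boundary once. I would cut the tape into $\ceil{h/s}$ contiguous blocks of $s$ vertices (first duplicating it to length $2h$, so the cyclic wrap-around becomes a linear scan), keep in memory the block containing the current ``leading'' vertex, and stream forward through the vertices that the trailing support lines require---legitimate precisely because those indices never decrease. A new pass is spent whenever some caliper crosses a block boundary; by monotonicity the number of such crossings, summed over the constantly many calipers, is $O(\ceil{h/s})$, which yields the claimed pass and space bounds.

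Composing the two stages, the pass complexity is $O(\ceil{h/s}\log n) + O(\ceil{h/s}) = O(\ceil{h/s}\log n)$ when the deterministic algorithm of Theorem~\ref{thm:det} is used (valid for $s = \Omega(\log n)$), and $O(h/s + \log n) + O(\ceil{h/s}) = O(h/s + \log n)$ with high probability when the randomized algorithm of Theorem~\ref{thm:rand} is used instead (valid for $s = \Omega(1)$). The working space is $O(s)$ throughout, and the running time is dominated by that of the convex-hull stage, namely $O(n \log^2 n)$.

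The step I expect to be the main obstacle is the faithful W-stream simulation of the rotating calipers: one must argue rigorously that monotone caliper motion caps the number of block reloads at $O(\ceil{h/s})$ rather than $\Theta(h)$, handle the cyclic order of $\ext{P}$ cleanly (the tape-doubling trick, plus making sure the stage reports a single answer rather than two ``half'' answers), and---the subtlest point---ensure the running time of this stage does not exceed $O(n\log^2 n)$, since a naive pass-by-pass implementation re-scans the length-$\Theta(h)$ tape on each reload and costs $\Theta(h^2/s)$; staying within budget may require a more careful post-processing (for instance a two-scale scheme that first computes one summary per block and only then refines within the $O(\ceil{h/s})$ relevant block pairs, or a divide-and-conquer in the style of the convex-hull algorithm itself).
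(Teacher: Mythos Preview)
Your plan is correct and follows essentially the same two–stage idea as the paper: produce $\ext{P}$ in cyclic order via Theorem~\ref{thm:det}/\ref{thm:rand}, then simulate rotating calipers block-by-block using monotonicity of the caliper motion, at a cost of $O(\lceil h/s\rceil)$ additional passes. The one implementation difference is how the antipodal access is organized. The paper launches two (for the diameter) or four (for the rectangle) \emph{simultaneous} instantiations of the hull algorithm, each started at a different extreme point ($p_\ell$, $p_r$, etc.), so that every caliper has its own output stream already positioned at its starting vertex; one then loads $s/2$ vertices from each stream, advances the calipers until some buffer empties, and reloads. You instead write a single cyclic hull and double the tape. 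Both devices achieve the same effect, and both yield the same $O(\lceil h/s\rceil)$ pass bound for the caliper stage by the monotone-advance argument you sketch.

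Your flagged ``main obstacle''---that a naive per-pass rescan of the length-$\Theta(h)$ tape costs $\Theta(h^2/s)$ time---is not a genuine gap. The paper accounts only $O(h)$ processing time for the caliper stage (each support line advances $O(h)$ times in total, $O(1)$ work per advance), consistent with the time accounting used for Theorems~\ref{thm:det} and~\ref{thm:rand}; under that convention your version also costs $O(h)$, and the overall $O(n\log^2 n)$ bound stands without any of the more elaborate schemes you contemplate.
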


\smallskip

\noindent\textbf{Approximate Convex Hulls.} Given the hardness result shown in Theorem~\ref{thm:lower}, we know that one cannot have a constant-pass streaming algorithm that uses $o(h)$ space to compute the convex hull. In view of this, to have constant-pass $o(h)$-space streaming algorithms, one may consider  computing an approximate convex hulls. There are several results studying on how to efficiently find an approximate convex hull in the streaming model, based on a given error measurement.  The error criterion varies from the Euclidean distance~\cite{hs08}, and Hausdorff metric distance~\cite{Lopez00,Lopez05}, to the relative area error~\cite{rr15}. These algorithms use a single pass, $O(s)$ space, and can bound the given error measurement by a function of $s$. 

\smallskip

\noindent\textbf{Paper Organization.} In Section~\ref{sec:pre}, we present a new $O(n\log h)$-time RAM algorithm to compute the convex hull. Then, in Section~\ref{sec:algo}, we present a constant-pass streaming algorithm in the streaming model. In Section~\ref{sec:wrstr}, we present two W-stream algorithms, both of which use $O(s)$ space where $s$ can be as small as $O(\log n)$. We generalize the previous lower bound result in Section~\ref{sec:lower}, and prove a higher (but conditional) lower bound in Section~\ref{sec:clower}. We place the proofs of Corollary~\ref{cor:diam} in Section~\ref{sec:proof}.

\newcommand{\UH}[1]{U(#1)}

\section{Yet another $O(n \log h)$-time algorithm in the RAM model}\label{sec:pre}

Our streaming algorithm is based on a RAM algorithm, which we present in this section.  This RAM algorithm is a modification of  Kirkpatrick and Seidel's ultimate convex hull algorithm in the RAM model~\cite{ks86}. Chan and Chen's streaming algorithm~\cite{Chan07} is also based on Kirkpatrick and Seidel's algorithm, and thus the structure of these two streaming algorithms have some similarities. The changes are made so that our streaming algorithm does not have to rely on solving linear programs, thus reducing the computation cost compared to Chan and Chen's algorithm.

In what follows, we only discuss how to compute the upper hull because the lower hull can be computed analogously. Formally, computing the \emph{upper hull} $\UH{P}$ of a point set $P$ means outputting that part of the extreme points $v_1, v_2, \ldots, v_t \in \ext{P}$ in clockwise order so that $v_1$ is the leftmost point in $P$ and $v_t$ is the rightmost point in $P$, tie-breaking by picking the point with the largest $y$-coordinate, so that all points in $P$ lie below or on the line passing through $v_i, v_{i+1}$ for each $1 \le i < t$. Note that each of $v_1, v_2, \ldots, v_t$ has a unique $x$-coordinate, and each line that passes through $v_i$ and $v_{i+1}$ for $1 \le i < t$ has a finite slope. 

Roughly speaking, Kirkpatrick and Seidel's ultimate convex hull algorithm~\cite{ks86} evenly divides the point set into two subsets by a vertical line $\ell: x = \mu$, finds the hull edge in the upper hull that crosses $\ell$, and recurses on the two separated subsets. By appealing to the point-line duality, finding the crossing hull edge is equivalent to solving a linear program. Chan and Chen's streaming algorithm is adapted from this implementation of the ultimate convex hull algorithm. Their algorithm evenly divides the point set into $r+1$ subsets for $r \ge 1$ by $r$ vertical lines, finds the hull edges in the upper hull that cross these vertical lines, and recurses on the $r+1$ separated subsets. Finding these $r$ crossing hull edges is equivalent to solving $r$ linear programs, where the constraint sets for each are the same but the objective functions are different. 

\begin{table}[!h]
\centering
\begin{tabular}{|c|c|c|}
\hline 
 & Find $r$ hull edges, and recurse. & Find $r$ extreme points, and recurse. \\
\hline \hline 
$r = 1$ & Kirkpatrick and Seidel 1986~\cite{ks86} & Chan 1995~\cite{chanthesis95} \\
\hline 
any $r \ge 1$ & Chan and Chen 2007~\cite{Chan07}& This paper \\
\hline
\end{tabular}
\caption{Categorization of four $O(n \log h)$-time algorithms for convex hull. \label{tab:4algo}}
\end{table}

\vspace{-0.5cm}

In~\cite[Section 2]{chanthesis95}, Chan gives another version  of Kirkpatrick and Seidel's ultimate convex hull algorithm, that finds a suitable (possibly random) extreme point, divides the point set into two by $x$-coordinate, and recurses. The extreme point can be found by elementary techniques. Our streaming algorithm is adapted from the latter algorithm.  It finds $r$ suitable extreme points for $r \ge 1$, divides the point set into $r+1$ subsets by $x$-coordinate, and recurses on each subset. Though this generalization sounds straightforward, finding the $r$ suitable extreme points needs a different approach from that for finding a single suitable extreme point. We reduce finding these $r$ suitable extreme points to computing the upper hulls of $n/(r+1)$ small point sets. This reduction is the key observation of our RAM algorithm and is described in detail in the subsequent paragraphs. These four algorithms are categorized in Table~\ref{tab:4algo}.

Given $r$, our algorithm partitions $P$ arbitrarily into $G_1, G_2, \ldots, G_{n/(r+1)}$ so that each $G_j$ has size in $[1, r+1]$, and then computes the upper hull of each $G_j$. Let $Q$ be the union of the slopes of the hull edges in the upper hull of $G_1$, $G_2$, \ldots, $G_{n/(r+1)}$, which is a multiset. Let $\sigma_k$ be the slope of rank  $k|Q|/(r+1)$ in $Q$, for $k \in [1, r]$, in other words, $\sigma_k$ is the $k$th $(r+1)$-quantile in $Q$. To simplify the presentation, let $\sigma_0 = -\infty$ and $\sigma_{r+1} = \infty$. Let $s_k$ be the extreme point in $P$ that \emph{supports} slope $\sigma_k$, for each $k \in [0, r+1]$. That is, for every point $p \in P$ draw a line passing through $p$ with slope $\sigma_k$, and pick $s_k$ as the point whose line has the highest $y$-intercept. We define $s_0 = p_L$, the point with the smallest $x$-coordinate, and $s_{r+1} = p_R$, the point with the largest $x$-coordinate. If any $s_k$ has more than one candidates, pick the point that has the largest $y$-coordinate. Let $x(p)$ denotes the $x$-coordinate of point $p$, and let $\sigma(p, q)$ denote the slope of the line that passes through points $p$ and $q$.

We use these $s_1, s_2, \ldots, s_r$ as the $r$ \emph{suitable} extreme points with which to refine $P$ into $P_1, P_2, \ldots, P_{r+1}$ where we say the $s_i$ are suitable in that each $P_k$ has size bounded by $O(|P|/(r+1))$. Initially, set $P_k = \emptyset$ for all $k \in [1, r+1]$. The refinement applies the \emph{cascade-pruning} described in Lemma~\ref{lem:cascade-pruning} on $G_j$ for each $j \in [1, n/(r+1)]$, which uses the known pruning technique stated in Lemma~\ref{lem:prune} as a building block, and works as follows:
\begin{itemize}[leftmargin=1.8cm]
	\item[Step 1.] Compute $\UH{G_j}$, and obtain the extreme points $v_1, v_2, \ldots, v_t \in \UH{G_j}$ in clockwise order.
	\item[Step 2.] Set $P_k \leftarrow P_k \cup \{v_i : i \in [\alpha, \beta], x(s_{k-1}) < x(v_i) < x(s_k)\}$ for each $k \in [1, r+1]$, where $v_\alpha$ (resp. $v_\beta$) is the extreme point in $G_j$ that supports $\sigma_{k-1}$ (resp. $\sigma_{k}$).
\end{itemize}

The pruning in Step 2 is two-fold. For any $i < \alpha$, if $x(v_i) \le x(s_{k-1})$, then such a $v_i$ cannot be placed in $P_k$. Otherwise $x(v_i) > x(s_{k-1})$, and Case 2 of Lemma~\ref{lem:cascade-pruning} applies.  Again, such a $v_i$ cannot be placed in $P_k$. Similarly, $v_i$ for any $i > \beta$ cannot be placed in $P_k$ either. Finally, remove the points that lie below or on the line passing through $s_{k-1}, s_k$ from $P_k$ for each $k \in [1, r+1]$.

\begin{lemma}[Chan, \cite{chanthesis95}]\label{lem:prune}
Given a point set $P \subset \mathbb{R}^2$ and a slope $\sigma$, let $s$ be the extreme point in $P$ that supports $\sigma$. Then, for any pair of points $p, q \in P$ where $x(p) < x(q)$,
	\begin{itemize}[leftmargin=3cm]
		\item[Case 1.] If $\sigma(p, q) \le \sigma$ and $x(q) \le x(s)$, then $q \notin \UH{P}$.
		\item[Case 2.] If $\sigma(p, q) \ge \sigma$ and $x(p) \ge x(s)$, then $p \notin \UH{P}$.
	\end{itemize}
\end{lemma}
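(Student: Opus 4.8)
The plan is to prove Case~1 in detail; Case~2 then follows by reflecting the plane across a vertical line, which sends $p,q$ to $p',q'$ with $x(q')<x(p')$, negates every slope, and preserves both the notion of ``supporting line of a given slope'' and the upper hull, so that the hypotheses and conclusion of Case~2 become exactly those of Case~1 with $-\sigma$ in place of $\sigma$. So fix $P$, a slope $\sigma$, its supporter $s$, and points $p,q\in P$ with $x(p)<x(q)\le x(s)$ and $\sigma(p,q)\le\sigma$; the goal is to show $q\notin\UH{P}$.

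The first step is to extract a slope inequality from the defining property of $s$. Since the line through $s$ with slope $\sigma$ has the largest $y$-intercept among all such lines through points of $P$, we have $y(t)-\sigma x(t)\le y(s)-\sigma x(s)$ for every $t\in P$. Taking $t=p$ and dividing by $x(s)-x(p)$, which is positive because $x(p)<x(q)\le x(s)$, gives $\sigma(p,s)\ge\sigma$. Combined with the hypothesis this yields the chain
\[
  \sigma(p,s)\;\ge\;\sigma\;\ge\;\sigma(p,q).
\]

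The second step turns this chain into a certificate that $q$ is not extreme. Because $x(p)<x(q)\le x(s)$, the abscissa $x(q)$ lies in the $x$-interval spanned by the segment $\overline{ps}$, whose endpoints both belong to $P$. Comparing the height of $\overline{ps}$ at abscissa $x(q)$ with $y(q)$, the inequality $\sigma(p,s)\ge\sigma(p,q)$ says precisely that $q$ lies on or below $\overline{ps}$. A vertex of $\conv{P}$ on the upper hull lies strictly above every chord $\overline{ps}$ with $p,s\in P$ that spans its abscissa in the interior (by concavity of the upper hull function, since $p$ and $s$ lie on or below that function), so $q$ cannot be such a vertex, i.e.\ $q\notin\UH{P}$.

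The only delicate point, which I expect to be the main obstacle, is the degenerate case where $q$ lies exactly on $\overline{ps}$, i.e.\ $p,q,s$ are collinear. If $x(q)<x(s)$, this still places $q$ in the relative interior of the edge $\overline{ps}\subseteq\conv{P}$, so $q$ is not a vertex; if $x(q)=x(s)$, the support inequality at $t=q$ forces either $y(q)<y(s)$, so that $q$ lies strictly below $s$ and is again not extreme, or $q=s$, which is incompatible with the hypotheses together with the tie-breaking convention used to single out $s$. I would dispose of these sub-cases explicitly at the end; everything else is the one-line slope computation above.
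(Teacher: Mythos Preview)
The paper does not actually prove Lemma~\ref{lem:prune}: it is stated with attribution to Chan~\cite{chanthesis95} and used as a black box, so there is no in-paper argument to compare against. Your proof is the standard direct one and is correct: from the supporting-line inequality $y(p)-\sigma x(p)\le y(s)-\sigma x(s)$ and $x(p)<x(s)$ you get $\sigma(p,s)\ge\sigma\ge\sigma(p,q)$, hence $q$ lies on or below the chord $\overline{ps}$ with both endpoints in $P$, which rules $q$ out of the upper hull; the symmetry reduction for Case~2 is fine.

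One small wording issue: the sentence ``a vertex of $\conv{P}$ on the upper hull lies strictly above every chord $\overline{ps}$ with $p,s\in P$ that spans its abscissa in the interior'' is not literally true (an upper-hull vertex can lie \emph{on} such a chord when $p$ or $s$ equals that vertex, or when several hull vertices are collinear). What you actually need, and what your computation gives, is that $q$ lies on or below $\overline{ps}$; strict inequality immediately excludes $q$ from the upper boundary, while equality with $x(p)<x(q)<x(s)$ makes $q$ a proper convex combination of $p$ and $s$, hence not extreme. Your handling of the boundary case $x(q)=x(s)$ is essentially right: the support inequality forces $y(q)\le y(s)$, and $y(q)<y(s)$ finishes it. The residual sub-case $q=s$ cannot be excluded by the paper's tie-breaking rule alone for all slopes~$\sigma$ (think of $\sigma<0$ with $p$ and $s$ both on the supporting line: the largest-$y$ rule would have selected $p$, not $s$), so strictly speaking the lemma as stated needs either a general-position assumption or the additional hypothesis $q\ne s$; this is a known wrinkle in Chan's pruning lemma and does not affect how the paper uses it.
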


{
\renewcommand{\algorithmcfname}{RAM Algorithm}
\renewcommand{\thealgocf}{}
\begin{algorithm}[H]
	Let $G_1, G_2, \ldots, G_{n/(r+1)}$ be any partition of $P$ such that each $G_j$ has size in $[1, r+1]$\; 
	$Q \leftarrow \emptyset$\;
	\ForEach{$G_j$ in the partition}{
		Compute the upper hull $v_1, v_2, \ldots, v_t$ of $G_j$\;
		\For{$i = 1$ \KwTo $t-1$}{
			$\sigma \leftarrow $ the slope of the line passing through $v_i, v_{i+1}$\;
			$Q \leftarrow Q \cup \{\sigma\}$\;
		}
	}
	\For{$k = 1$ \KwTo $r$}{
		$\sigma_k \leftarrow \mbox{the } k|Q|/(r+1)$-th smallest slope in $Q$\;
		$s_k \leftarrow \mbox{the extreme point in $P$ that supports } \sigma_k$\;
	}
	$(s_0, \sigma_0, s_{r+1}, \sigma_{r+1}) \leftarrow (p_L, -\infty, p_R, \infty)$\;
	\For{$k = 1$ \KwTo $r+1$}{
		$P_k \leftarrow \emptyset$\;
		\ForEach{$G_j$ in the partition}{
			Compute the upper hull $v_1, v_2, \ldots, v_t$ of $G_j$\;
			Find the extreme point $v_{\alpha}$ (resp. $v_\beta$) in $G_j$ that supports $\sigma_{k-1}$ (resp. $\sigma_k$)\;
			$P_k \leftarrow P_k \cup \{v_{\alpha}, v_{\alpha+1}, \ldots, v_{\beta}\}$\;
		}
		Remove the points that lie below or on the line passing through $s_{k-1}$, $s_{k}$ from $P_k$\;
		\If{$P_k \ne \emptyset$}{
			Recurse on $P_k \cup \{s_{k-1}, s_k\}$\;
		}
	}
\caption*{Compute the upper hull $\UH{P}$ of $P$. \label{fig:RAMalgo}}
\end{algorithm}
}
\addtocounter{algocf}{1}

\begin{lemma}[Cascade-pruning]\label{lem:cascade-pruning}
Given a point set $P \subset \mathbb{R}^2$ and a slope $\sigma$, let $s$ be the extreme point in $P$ that supports $\sigma$. Then, for any $G \subseteq P$ whose $\UH{G} = \{v_1, v_2, \ldots, v_t\}$, $x(v_1) < x(v_2) < \cdots < x(v_t)$, and where $\delta \in [1, t]$ is such that $v_\delta$ is the extreme point in $G$ that supports $\sigma$, we have:
	\begin{itemize}[leftmargin=2cm]
		\item[Case 1.] If $x(v_i) \le x(s)$ for some $i \in [\delta+1, t]$, then $v_{\delta+1}, \ldots, v_{i} \notin \UH{P}$.
		\item[Case 2.] If $x(v_i) \ge x(s)$ for some $i \in [1, \delta-1]$, then $v_{i}, \ldots, v_{\delta-1} \notin \UH{P}$.
	\end{itemize}
\end{lemma}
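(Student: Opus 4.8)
My plan is to derive both cases as a ``cascading'' (i.e.\ index-by-index) application of Chan's pruning lemma, Lemma~\ref{lem:prune}. The only substantive point is a short translation: being the extreme point of $G$ that supports $\sigma$ forces a slope bound between $v_\delta$ and every other vertex of $\UH{G}$, and once this is in hand the lemma follows by applying Lemma~\ref{lem:prune} once for each index in the claimed range.

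First I would record the translation. By definition, $v_\delta$ being the extreme point of $G$ that supports $\sigma$ means $y(v_\delta) - \sigma\,x(v_\delta) \ge y(v) - \sigma\,x(v)$ for every $v \in G$. Specializing to $v = v_j$ for an index $j$ with $x(v_j) > x(v_\delta)$ and dividing the resulting inequality by the positive quantity $x(v_j) - x(v_\delta)$ gives $\sigma(v_\delta, v_j) \le \sigma$; symmetrically, for $j$ with $x(v_j) < x(v_\delta)$ one obtains $\sigma(v_j, v_\delta) \ge \sigma$. (Note this uses only $v_j \in G$, not concavity of $\UH{G}$.) I would also recall that the vertices of $\UH{G}$ satisfy $x(v_1) < x(v_2) < \cdots < x(v_t)$, so $x$-coordinates are monotone in the index.

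For Case~1, assume $\delta < t$ (otherwise the index set is empty) and fix $i \in [\delta+1, t]$ with $x(v_i) \le x(s)$. For each $j \in [\delta+1, i]$ I would invoke Lemma~\ref{lem:prune}, Case~1, with $p = v_\delta$ and $q = v_j$: the requirement $x(v_\delta) < x(v_j)$ holds since $j > \delta$; the requirement $\sigma(v_\delta, v_j) \le \sigma$ holds by the translation above; and the requirement $x(v_j) \le x(s)$ holds because $x(v_j) \le x(v_i) \le x(s)$ by monotonicity. Hence $v_j \notin \UH{P}$ for every such $j$, which is precisely the claim $v_{\delta+1}, \ldots, v_i \notin \UH{P}$. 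Case~2 is the mirror image: assuming $\delta > 1$ and fixing $i \in [1, \delta-1]$ with $x(v_i) \ge x(s)$, for each $j \in [i, \delta-1]$ apply Lemma~\ref{lem:prune}, Case~2, with $p = v_j$ and $q = v_\delta$, using $x(v_j) < x(v_\delta)$, $\sigma(v_j, v_\delta) \ge \sigma$, and $x(v_j) \ge x(v_i) \ge x(s)$.

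I do not anticipate a genuine obstacle here; the proof is essentially a two-line argument per case. The only items worth a sentence of care are the boundary cases ($\delta = t$ in Case~1, $\delta = 1$ in Case~2, where nothing is asserted) and checking that the tie-breaking rule defining ``supports $\sigma$'' is consistent between $G$ and $P$, so that $v_\delta$ coincides with $s$ whenever $s \in G$ and the pruned vertices $v_j$ ($j \neq \delta$) are therefore never equal to $s$. I expect the final write-up to be short.
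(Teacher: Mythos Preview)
Your proposal is correct and follows essentially the same approach as the paper: both arguments prove each case by applying Lemma~\ref{lem:prune} once for every index $j$ in the claimed range, after first checking the slope hypothesis and then the $x$-coordinate hypothesis via monotonicity of the $x(v_j)$. The one cosmetic difference is the choice of point pair fed into Lemma~\ref{lem:prune}: the paper uses the \emph{consecutive} pair $(v_{j-1},v_j)$ and obtains $\sigma(v_{j-1},v_j)\le\sigma$ from the fact that edge slopes along $\UH{G}$ are decreasing past $v_\delta$, whereas you use the pair $(v_\delta,v_j)$ and obtain $\sigma(v_\delta,v_j)\le\sigma$ directly from the defining inequality of ``$v_\delta$ supports $\sigma$ in $G$''. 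Your variant is marginally cleaner in that it does not invoke concavity of $\UH{G}$ at all (as you yourself note), while the paper's variant stays closer to the ``hull edge'' intuition; both are one-line verifications and neither buys anything substantive over the other.
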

\begin{proof}
Observe that $\sigma(v_{j}, v_{j+1}) \ge \sigma$ for all $j \in [1, \delta-1]$ and $\sigma(v_{j-1}, v_{j}) \le \sigma$ for all $j \in [\delta+1, t]$ because $v_1, v_2, \ldots, v_t$ are extreme points in $\UH{G}$ in clockwise order and $v_\delta$ is the extreme point in $G$ that supports $\sigma$. Since there is an  $i \in [\delta+1, t]$ such that  $x(v_i) \le x(s)$, we have $x(v_j) \le x(s)$ for each $j \in [\delta+1, i]$. The above are exactly the conditions of Case 1 in Lemma~\ref{lem:prune} for all point pairs $(v_{j-1}, v_{j})$ whose $j \in [\delta+1, i]$. Thus, $v_{j} \notin \UH{P}$ for all $j \in [\delta+1, i]$. The other case can be proved analogously.
\end{proof}

We get the exact bound for each $P_k$ in Lemma~\ref{lem:balanced}, noting that $|P_k| \le \frac{3}{4}|P|$ for $r=1$. 

\begin{lemma}\label{lem:balanced}
$
	|P_k| \le (\frac{2}{r+1}-\frac{1}{(r+1)^2}) |P| \le 2|P|/(r+1) \mbox{ for each } k \in [1, r+1]. 
$
\end{lemma}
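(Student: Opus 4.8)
The plan is to bound $|P_k|$ \emph{before} any of the point‑removal operations in Step~2 are applied; since removing points only shrinks $P_k$ and the recursion is run on (a subset of) $P_k\cup\{s_{k-1},s_k\}$, it suffices to count, for each group $G_j$, how many of its upper‑hull vertices $v_\alpha,v_{\alpha+1},\dots,v_\beta$ are inserted into $P_k$, and then sum over the $g=n/(r+1)$ groups. I would not invoke Lemma~\ref{lem:cascade-pruning} at all for this estimate.

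First I would fix $G_j$ with $\UH{G_j}=(v_1,\dots,v_t)$ in clockwise order and edge slopes $e_i:=\sigma(v_i,v_{i+1})$ for $1\le i<t$, which are strictly monotone. For a slope $\sigma$, let $\delta_j(\sigma)\in[1,t]$ be the index of the vertex of $\UH{G_j}$ supporting $\sigma$. Because the $e_i$ are strictly monotone, $\delta_j(\cdot)$ is a monotone step function whose jumps occur exactly at $e_1,\dots,e_{t-1}$, and since the query slopes are sorted, $-\infty=\sigma_0\le\sigma_1\le\cdots\le\sigma_{r+1}=\infty$, the indices $\delta_j(\sigma_0),\delta_j(\sigma_1),\dots,\delta_j(\sigma_{r+1})$ form a monotone sequence running between the two indices $1$ and $t$ of the extreme vertices of $G_j$ (aligning orientations so that, as at the global level, $s_0=p_L$ and $s_{r+1}=p_R$ play the roles of the first and last supports). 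Hence the vertices $G_j$ contributes to $P_k$ are exactly $v_{\delta_j(\sigma_{k-1})},\dots,v_{\delta_j(\sigma_k)}$, whose count is $1$ plus the number of edges $e_i$ of $\UH{G_j}$ lying strictly between the two supports; and over $k=1,\dots,r+1$ these edge‑sets partition $\{e_1,\dots,e_{t-1}\}$, consecutive arcs overlapping only in the single support vertex responsible for the extra $+1$ per band per group.

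Next I would use that an edge of $\UH{G_j}$ lying between the supports of $\sigma_{k-1}$ and $\sigma_k$ has slope weakly between $\sigma_{k-1}$ and $\sigma_k$ — immediate from monotonicity of the $e_i$ and the defining inequalities of a support vertex, i.e.\ the observation that opens the proof of Lemma~\ref{lem:cascade-pruning}. Summing the previous paragraph over all groups then gives $|P_k|\le g+|Q_k|$, where $Q_k\subseteq Q$ is the sub‑multiset of edge slopes assigned to ``band $k$'' and $Q_1,\dots,Q_{r+1}$ partition $Q$; since $\sigma_1,\dots,\sigma_r$ are the $(r+1)$‑quantiles of $Q$, band $k$ receives the slopes of rank in $\big((k-1)|Q|/(r+1),\,k|Q|/(r+1)\big]$, so $|Q_k|\le|Q|/(r+1)$. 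Plugging in the crude counts $g=n/(r+1)$ and $|Q|=\sum_j(|\UH{G_j}|-1)\le\sum_j(|G_j|-1)=n-g=nr/(r+1)$ yields
\[
|P_k|\ \le\ \frac{n}{r+1}+\frac{1}{r+1}\cdot\frac{nr}{r+1}\ =\ \Big(\frac{2}{r+1}-\frac{1}{(r+1)^2}\Big)|P|,
\]
which is the claim; for $r=1$ this is $\tfrac34|P|$, and $|P_k|\le 2|P|/(r+1)$ follows trivially.

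The step I expect to be the main obstacle is the inequality $|Q_k|\le|Q|/(r+1)$ when $Q$ contains repeated slopes: the band into which a hull edge of value exactly $\sigma_k$ is placed is forced by the tie‑breaking rule in the definition of a support vertex, so one must verify this is consistent with a single global linear order on the multiset $Q$ from which the quantiles are read. The clean way to dispatch this is to fix, once and for all, a linear extension of the value order on $Q$ that refines each per‑group order and agrees with the support tie‑breaking, and to define both the $\sigma_k$ and the bands from that order; if one assumes the slopes in $Q$ are distinct (an infinitesimal perturbation of $P$) the issue disappears. A secondary, routine point is divisibility: if $(r+1)\nmid n$ the same computation gives the stated bound up to rounding, and in all cases $|P_k|\le 2|P|/(r+1)$, which is what the later sections use.
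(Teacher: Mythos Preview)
Your argument is correct and is essentially the same as the paper's proof: both count the contribution of each group $G_j$ to $P_k$ as $\beta_j-\alpha_j+1$, observe that the $\beta_j-\alpha_j$ hull edges between the two support vertices have slopes in the $k$-th quantile band, sum over groups to get $|P_k|\le n/(r+1)+|Q|/(r+1)$, and finish with $|Q|\le rn/(r+1)$. The paper's opening reference to cascade-pruning is only context for how $P_k$ is formed; its counting step, like yours, uses nothing beyond the definition of Step~2 and the quantile property, so your remark that Lemma~\ref{lem:cascade-pruning} is not needed for the size bound is in agreement with what the paper actually does.
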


\begin{proof}
To ensure that, for every $k \in [1, r+1]$, $P_k$ is a small fraction of $P$, we use the cascade-pruning procedure described in Lemma~\ref{lem:cascade-pruning}. Let $\{v_{1}, v_{2}, \ldots, v_{t}\}$ be $\UH{G_j}$ for some $j \in [n/(r+1)]$ where $x(v_1) < x(v_2) < \cdots < x(v_t)$. Let $v_{\alpha_j}$ (resp. $v_{\beta_j}$) be the extreme point in $G_j$ that supports $\sigma_{k-1}$ (resp. $\sigma_{k}$).

Let $n_j$ be the number of points in $P_k \cap G_j$. Recall that $P_k$ does not contain any $v_i$ for any $i \notin [\alpha_j, \beta_j]$, and hence $n_j \le \beta_j - \alpha_j+1$. Observe that point pair $(v_{i}, v_{i+1})$ has slope in the open interval $(\sigma_{k-1}, \sigma_{k})$ for each $i \in [\alpha_j, \beta_j-1]$. Since $\sigma_{k-1}$ (resp. $\sigma_{k}$) is the $(k-1)|Q|/(r+1)$-th largest slope (resp. the $k|Q|/(r+1)$-th largest slope) in $Q$, $Q$ has at most $|Q|/(r+1)$ slopes in the open interval $(\sigma_{k-1}, \sigma_{k})$. This yields that
$$
	\sum_{j = 1}^{n/(r+1)} n_j-1 \le \frac{|Q|}{r+1} \Rightarrow \sum_{j = 1}^{n/(r+1)} n_j \le \frac{|Q|}{r+1} + \frac{n}{r+1} \le \frac{r|P|}{(r+1)^2} + \frac{|P|}{r+1}
$$
The last inequality holds because $|Q| \le r|P|/(r+1)$, and it establishes that the number of points from all $G_j$'s that comprise $P_k$ for each $k \in [1, r+1]$ is at most $2|P|/(r+1)$. 
\end{proof}

For each $k \in [1, r+1]$, if $P_k \neq \emptyset$, then our algorithm recurses on $P_k \cup \{s_{k-1}, s_k\}$. This ensures that every subproblem has an input that contains some intermediate extreme point(s), i.e. not the leftmost and rightmost extreme points, and any two subproblems where one is not an ancestor or a descendant of the other have an empty intersection in their intermediate extreme point set. As a result, 
\begin{lemma}\label{lem:leaves}
Our algorithm has $O(h)$ leaf subproblems. 
\end{lemma}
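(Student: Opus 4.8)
The plan is to bound the number of leaves of the recursion tree $T$: its root is the original problem on $P$, and the children of a node $X$ with point set $P(X)$ are exactly the subproblems on $P_k\cup\{s_{k-1},s_k\}$ over those $k\in[1,r+1]$ with $P_k\neq\emptyset$ (as prescribed by the last line of the algorithm). Write $\ell(X),r(X)$ for the leftmost and rightmost points of $P(X)$, ties broken as in the definition of the upper hull.

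The first and main step is to prove, by induction on the depth of $X$ in $T$, that $\UH{P(X)}$ is a contiguous sub-chain of $\UH{P}$; equivalently, that $\ell(X),r(X)\in\ext{P}$ and that $P(X)$ contains every extreme point of $P$ whose $x$-coordinate lies in $[x(\ell(X)),x(r(X))]$. The base case $X=\mathrm{root}$ is immediate. For the inductive step, fix a child on $P_k\cup\{s_{k-1},s_k\}$. The points $s_{k-1}$ and $s_k$ either support a finite slope from $Q$ or coincide with $\ell(X),r(X)$, so they lie on $\UH{P(X)}$, hence by the induction hypothesis on $\UH{P}$; and since every point placed in $P_k$ has $x$-coordinate strictly between $x(s_{k-1})$ and $x(s_k)$, these two points are exactly the $x$-extremes of the child. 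It remains to check that no extreme point $w$ of $P$ with $x(s_{k-1})<x(w)<x(s_k)$ is discarded. If $w\in G_j$, then $w$ is also an extreme point of $G_j$, so $w=v_i$ for some $i$; since $w\in\UH{P}$, Cases~1 and~2 of Lemma~\ref{lem:cascade-pruning} rule out $i\notin[\alpha_j,\beta_j]$, so $w$ is added to $P_k$ in Step~2; and since $s_{k-1},s_k\in\UH{P}$ and $w$ is a vertex of $\UH{P}$ strictly between them in $x$, the point $w$ lies strictly above the chord $s_{k-1}s_k$, so it survives the final removal step. Hence $P_k\cup\{s_{k-1},s_k\}$ retains every extreme point of $P$ in its $x$-range, completing the induction.

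Given this, for a node $X$ let $I(X)\subseteq\ext{P}$ be the set of extreme points of $P$ strictly between $\ell(X)$ and $r(X)$ in $x$-coordinate; by the previous step $I(X)$ equals the set of non-endpoint vertices of $\UH{P(X)}$. Two observations then finish the proof. First, $I(X)$ is the disjoint union of the intermediate ones among $s_1,\dots,s_r$ (again vertices of $\UH{P}$) together with the sets $I(X')$ over the children $X'$ of $X$: an intermediate vertex $w$ of $\UH{P(X)}$ either equals some $s_j$, or its unique $x$-coordinate falls in exactly one open interval $(x(s_{k-1}),x(s_k))$, and in the latter case $w$ is retained in $P_k$, so that child exists and $w\in I(X')$. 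Consequently, for incomparable nodes $X,Y$ with lowest common ancestor $Z$, passing to the two distinct children of $Z$ that are ancestors of $X$ and of $Y$ gives $I(X)\cap I(Y)=\emptyset$. Second, every node of $T$ has $I(X)\neq\emptyset$: a non-root node is created only when its $P_k\neq\emptyset$, which after the chord-removal step yields a point of $P(X)$ strictly above the chord $\ell(X)r(X)$, hence a non-endpoint vertex of $\UH{P(X)}$, i.e.\ a member of $I(X)$; and for the root, either $\UH{P}$ has at most two vertices and $T$ has $O(1)$ nodes, or $I(\mathrm{root})\neq\emptyset$. Since the leaves of $T$ are pairwise incomparable, the sets $\{I(X):X\text{ a leaf}\}$ are pairwise disjoint nonempty subsets of $\ext{P}$, so $T$ has at most $|\ext{P}|=h$ leaves, which is $O(h)$.

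The main obstacle is the first step: arguing that each subproblem's point set is a contiguous piece of $\UH{P}$. This is precisely where the correctness of cascade-pruning (Lemma~\ref{lem:cascade-pruning}) is used, and where the degenerate situations must be dispatched carefully: repeated slopes in $Q$ (so the quantiles $\sigma_k$ need not be distinct and some $P_k$ are forced empty), an $s_k$ that collapses onto $\ell(X)$ or $r(X)$, and the small base cases where $\UH{P(X)}$ has only two vertices — each handled by the tie-breaking conventions and each contributing only $O(1)$ extra leaves.
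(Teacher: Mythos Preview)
Your proof is correct and follows exactly the approach the paper sketches in the sentence immediately preceding the lemma: each subproblem contains at least one intermediate extreme point of $P$, and incomparable subproblems have disjoint intermediate-extreme-point sets, so there are at most $h$ leaves. The paper asserts these two facts without justification, whereas you supply the missing details (notably the inductive claim that each $\UH{P(X)}$ is a contiguous sub-chain of $\UH{P}$, via Lemma~\ref{lem:cascade-pruning}); so your write-up is simply a more rigorous version of the same argument.
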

We need Lemma~\ref{lem:leaves} to analyze the running time. 

\subsection*{Running Time}

Here we analyze the running time of the RAM algorithm for the case of $r = O(1)$ and defer the discussion for the case of $r = \omega(1)$ until the section on streaming algorithms. Let $T_C$ be the recursive computation tree of the RAM algorithm. The root of $T_C$ represents the initial problem of the recursive computation. Every node in $T_C$ has at most $r+1$ child nodes, each of which represents a recursive subproblem. 

For a computation node with the input point set $P$ whose $|P| < r$, we use any $O(|P|\log r)$-time algorithm to compute the convex hull. Otherwise, we need to compute $|P|/(r+1)$ convex hulls of point sets of size at most $r+1$, which runs in $O(|P|\log r)$ time (Lines 1-9). In addition, the quantile selection in $Q$ has the running time $O(|Q|\log r) = O(|P|\log r)$ (Line 11). The $r$ suitable extreme points can be found in $O(|P|\log r)$ time by Lemma~\ref{lem:getextp} (Line 12). The pruning procedure can be done in $O(|P|\log r)$ time by a simple merge (Lines 15-26). Hence, each computation node needs $O(|P|\log r)$ time. 

Since each child subproblem has an input set $P_k \cup \{s_{k-1}, s_k\}$ of size at most $2|P|/(r+1)+2$ (Lemma~\ref{lem:balanced}), the running time of child subproblem is an $(2/(r+1))$-fraction of its parent subproblem. Hence, $T_C$ is an \emph{$(2/(r+1))$-fading} computation tree where Edelsbrunner and Shi~\cite{es91} define a recursive computation tree to be $\alpha$-fading for some $\alpha < 1$ if the running time of a child subproblem is an $\alpha$-fraction of its parent. In~\cite{chanthesis95}, Chan extends Edelsbrunner and Shi's results and obtains that, if an $\alpha$-fading recursive computation tree has $L$ leaf nodes and the total running time of the nodes on each level is at most $F$, then the recursive computation tree has total running time $O(F \log L)$. Our algorithm has $O(h)$ leave nodes (Lemma~\ref{lem:leaves}) and $O(|P|\log r)$ time for the computation nodes on each level because two subproblems on the same level have their inputs only intersected at one of their extreme points. We get:

\begin{theorem}\label{thm:ram}
	The RAM algorithm runs in $O(n \log h \log r)$ time, and for $r = O(1)$ it is an $O(n \log h)$-time algorithm.
\end{theorem}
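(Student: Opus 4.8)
The plan is to place the recursive computation tree $T_C$ inside the $\alpha$-fading framework of Edelsbrunner and Shi~\cite{es91}, as extended by Chan~\cite{chanthesis95}, and then invoke the resulting $O(F\log L)$ time bound. Concretely I would establish four facts: (i) a single computation node on an input set $P$ runs in $O(|P|\log r)$ time; (ii) the nodes on any fixed level of $T_C$ have total input size $O(n)$ and hence total running time $F = O(n\log r)$; (iii) $T_C$ has $L = O(h)$ leaves; and (iv) $T_C$ is $\alpha$-fading with a constant $\alpha < 1$. Given (i)--(iv), Chan's lemma immediately yields total running time $O(F\log L) = O(n\log r\log h)$, which is $O(n\log h)$ whenever $r = O(1)$.

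For (i): a node handling $P$ computes the upper hulls of the $|P|/(r+1)$ groups $G_j$, each of size at most $r+1$, in $O(|P|\log r)$ time with any off-the-shelf hull routine (Lines 1--9); it then builds the slope multiset $Q$, which has $|Q| \le r|P|/(r+1) \le |P|$, and selects the $r$ quantiles $\sigma_1,\dots,\sigma_r$ in $O(|Q|\log r) = O(|P|\log r)$ time (Line 11); it finds the $r$ supporting extreme points $s_1,\dots,s_r$ in $O(|P|\log r)$ time via Lemma~\ref{lem:getextp} (Line 12); and it performs the cascade-pruning of each $\UH{G_j}$ into the $P_k$'s, together with the removal of the points on or below the segments $s_{k-1}s_k$, as a family of merges costing $O(|P|\log r)$ in total (Lines 15--26). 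Nodes with $|P| < r$ are solved directly within the same bound. For (iv): Lemma~\ref{lem:balanced} gives $|P_k| \le 2|P|/(r+1)$, so each child input $P_k \cup \{s_{k-1},s_k\}$ has size at most $2|P|/(r+1)+2$; hence, by (i), a child's running time is at most an $\alpha$-fraction of its parent's with $\alpha = 2/(r+1)$ for $r \ge 2$ (and $\alpha = 3/4$ for $r = 1$ using the sharper bound $|P_k| \le \frac{3}{4}|P|$), so $T_C$ is $\alpha$-fading.

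The main obstacle is (ii), the per-level bound. Here I would use the structural property argued just before Lemma~\ref{lem:leaves}: every subproblem's input contains an intermediate extreme point, and two subproblems neither of which is an ancestor of the other have disjoint sets of intermediate extreme points. Consequently the subproblems on a fixed level of $T_C$ own pairwise disjoint, nonempty sets of intermediate extreme points, so there are only $O(h)$ of them per level; moreover their ``fresh'' points lie in disjoint $x$-coordinate ranges and thus form disjoint subsets of $P$, while each subproblem merely re-includes its two boundary points $s_{k-1},s_k$, so the inputs on one level have total size $n + O(h) = O(n)$. Plugging $F = O(n\log r)$ and $L = O(h)$ into the $\alpha$-fading lemma gives the stated $O(n\log r\log h)$ bound, hence $O(n\log h)$ for constant $r$. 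I expect the delicate points to be precisely this level-wise accounting --- arguing that cousin subproblems overlap only in $O(h)$ shared extreme points --- together with a careful treatment of the base cases; everything else is bookkeeping over the pseudocode plus direct appeals to Lemmas~\ref{lem:balanced} and~\ref{lem:leaves}.
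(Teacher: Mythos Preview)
Your proposal is correct and follows essentially the same route as the paper: bound each node's work by $O(|P|\log r)$, show $T_C$ is $(2/(r+1))$-fading via Lemma~\ref{lem:balanced}, take $L=O(h)$ from Lemma~\ref{lem:leaves}, and obtain the per-level bound $F=O(n\log r)$ from the near-disjointness of sibling inputs before invoking Chan's $\alpha$-fading lemma. The paper's own argument is terser on point~(ii)---it simply notes that two subproblems on the same level intersect only in a single extreme point---but your more explicit accounting (disjoint $x$-ranges for the fresh points, $O(h)$ shared boundary points) arrives at the same $O(n)$ total and is a fine way to spell it out.
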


\section{A Simpler and Faster Streaming Algorithm}\label{sec:algo}

In this section, we show how to adapt our RAM algorithm to the streaming model. Our streaming algorithm is the same as our RAM algorithm, but we execute the subproblems on $T_C$ in BFS order. That is, starting from the root of $T_C$, all subproblems on $T_C$ of the same level are solved together in a round, then their invoked subproblems are solved together in the next round, and so on. We will see in a moment that our algorithm needs to scan the input $O(1)$ times for each round. Therefore, to have an $O(1)$-pass streaming algorithm, our approach requires $r = n^\delta$ for some positive constant $\delta < 1$. By setting $r = n^\delta$, we have:

\begin{lemma}\label{lem:internal}
By setting the parameter $r$ to be $n^\delta$ for any constant $\delta \in (0, 1)$, the recursive computation tree $T_C$ has $O\left(\delta^{-1} h \right)$ nodes.
\end{lemma}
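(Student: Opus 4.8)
The plan is to bound $|T_C|$ by controlling the depth of $T_C$ and the number of nodes on each level separately, and then take the product.

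First, the depth. By Lemma~\ref{lem:balanced}, a node whose input is $P$ passes to each child an input of size at most $\frac{2}{r+1}|P|+2$, the $+2$ accounting for the appended endpoints $s_{k-1},s_k$. Every internal node has $|P|\ge r$ (otherwise it is a base case and has no children), so for $r$ larger than a small absolute constant this child size is at most $\frac{4}{r+1}|P|\le\frac{4}{r}|P|$; that is, the input size decays by a factor of at least $r/4$ per level. Starting from $|P|=n$ at the root, after $\ell$ levels the input size is at most $n\,(4/r)^{\ell}$, which falls below the base-case threshold $r=\lceil n^{\delta}\rceil$ as soon as $\ell=\Theta(\log n/\log r)=\Theta(1/\delta)$; here I use that $r=n^{\Theta(1)}\to\infty$, so the additive $\log 4$ is absorbed into the leading term. (For the finitely many $n$ too small for this estimate, $T_C$ has $O(1)$ nodes and the claim is immediate, since $\delta^{-1}\ge 1$ and $h\ge 1$.) Hence $T_C$ has depth $d=O(\delta^{-1})$.

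Second, the width. I will use the structural property underlying Lemma~\ref{lem:leaves}: a subproblem is recursed on exactly when, after removing all points on or below the segment $s_{k-1}s_k$, the set $P_k$ is still nonempty, which is equivalent to there being an extreme point of $\UH{P}$ with $x$-coordinate strictly between $x(s_{k-1})$ and $x(s_k)$ — an \emph{intermediate} extreme point of that subproblem. Sibling subproblems own pairwise-disjoint open $x$-intervals (they partition the parent's range at the points $s_k$), and since parents on a common level already own disjoint $x$-intervals, so do all subproblems on any fixed level; consequently their sets of intermediate extreme points are pairwise disjoint, each nonempty, and all contained in $\ext{P}$. Therefore every level of $T_C$ has at most $|\ext{P}|=h$ nodes. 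Combining with the depth bound, $|T_C|\le(d+1)h=O(\delta^{-1}h)$. (Alternatively one may contract maximal degree-one paths of $T_C$: the contracted tree has $O(h)$ nodes by Lemma~\ref{lem:leaves}, and each contracted path has length at most $d=O(\delta^{-1})$, which gives the same count.)

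The depth estimate is routine bookkeeping — one only has to be careful with the rounding in $r=\lceil n^{\delta}\rceil$ and the additive $+2$. The step that needs care, and the main obstacle, is the per-level bound of $h$: it hinges on verifying that every recursed subproblem genuinely contains an extreme point of the global upper hull strictly interior to its $x$-range, and that these interiors are disjoint across a level, so that no extreme point is charged twice. Both facts are exactly what the proof of Lemma~\ref{lem:leaves} supplies, so the work here is to assemble them into the level-by-level count.
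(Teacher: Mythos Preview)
Your proof is correct and takes essentially the same approach as the paper: bound the depth by $O(\delta^{-1})$ via Lemma~\ref{lem:balanced}, bound the ``width'' using the disjointness of intermediate extreme points that underlies Lemma~\ref{lem:leaves}, and multiply. The paper's two-line proof phrases the second step as ``$O(h)$ leaves'' rather than ``at most $h$ nodes per level,'' but both rest on the same structural fact (pairwise-disjoint intermediate extreme-point sets for incomparable subproblems), and your parenthetical alternative is exactly the paper's argument.
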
 
\begin{proof}
This lemma holds because $T_C$ has depth $O(\log_r n) = O(\delta^{-1})$ by Lemma~\ref{lem:balanced} and $T_C$ has $O(h)$ leaf nodes by Lemma~\ref{lem:leaves}. 
\end{proof}

We assign a unique identifier $z \in [1, |T_C|]$ to each of $|T_C| = O(\delta^{-1}h)$ subproblems. Let $S_z$ be the subproblem on $T_C$ whose identifier is $z$. For each $z \in [1, |T_C|]$, $S_z$ has input point set $P_z$. $P_z$ is a subsequence of $P$ and is given to $S_z$ as an input stream of $|P_z|$ points. Our algorithm will generate $P_z$ more than once for $S_z$ to access, for all $z \in [1, |T_C|]$. The data structures used in $S_z$ also are suffixed with $z$. To compute $S_z$, naively we need $O(|P_z|)$ space. We will see in a moment that given $P_z$, how to solve $S_z$ using $O(r \log r |P_z|)$ space in $O(r \log |P_z|+ |P_z|\log r)$ time. We will also see how to generate the input for all the subproblems on $T_C$ of depth $d > 0$ in $O(1)$ passes. 
We now establish all these claims, after which we will be ready to prove Theorem~\ref{thm:read-only}.
We decompose $S_z$ into the following three subtasks and describe the algorithms for the subtasks in the subsequent subsections.

\begin{enumerate}
	\item Given $P_z$, obtain the $r$ quantile slopes $\sigma_1, \sigma_2, \ldots, \sigma_r$.
	\item Given $P_z$ and $\sigma_1, \sigma_2, \ldots, \sigma_r$, obtain the $r$ suitable extreme points $s_1, s_2, \ldots, s_r$.
	\item After the ancestor subproblems of $S_z$ (excluding $S_z$) are all solved, generate $P_z$.
\end{enumerate}

\subsection{Obtaining the $r$ quantile slopes}

To find the $r$ quantile slopes for $S_z$ (Lines 1-11 in the RAM algorithm) using small space, we use a Greenwald and Khanna~\cite{gk01} quantile summary structure, abbreviated as $QS_z$. This summary is a data structure that supports two operations: insert a slope ($QS_z$.insert($\sigma$)) and query for (an estimate of) the $t$-th smallest slope ($QS_z$.query($t$)) in $Q_z$. Given access to $QS_z$, we do not have to store the entire $P_z$ to obtain the $r$ quantile slopes. Instead, we invoke $QS_z$.insert($\sigma$) for each slope $\sigma \in Q_z$. After updating all slopes in $Q_z$, we obtain an estimate of the $(r+1)$-quantile of $Q_z$ by invoking $QS_z$.query($k|Q_z|/(r+1)$) for all $k \in [1, r]$. The detailed implementation of the above adaption to the streaming model is given in Algorithm~\ref{algo:slope}. 

\begin{algorithm}[H]
 Initialize $QS_z$\;
 $B_z \leftarrow \emptyset$\; 
 $q_z \leftarrow 0$\tcc*{$q_z$ counts $|Q_z|$}
 \ForEach{$p$ in $P_z$}{
   $B_z \leftarrow B_z \cup \{p\}$\;
	 \If{ $|B_z|$ equals $r+1$ \textbf{or} $p$ is the last point in $P_z$}{
	 	Compute the upper hull $v_1, v_2, \ldots, v_t$ of $B_z$\;
		\ForEach{$i = 1$ \KwTo $t-1$}{
			$QS_z$.insert($\sigma(v_i, v_{i+1})$)\;
			$q_z \leftarrow q_z+1$\;
		}
		$B_z \leftarrow \emptyset$\;
	 }
 }
 \ForEach{$k = 1$ \KwTo $r$}{
 	$\hat{\sigma}_k \leftarrow QS_z$.query($kq_z/(r+1)$)\;
 }
\caption{Compute the $r$ approximate quantile slopes for the subproblem $S_z$. \label{algo:slope}}
\end{algorithm}

$QS_z$.query($k|Q_z|/(r+1)$) returns an estimate $\hat{\sigma}_k$ that has an additive error $c|Q_z|$ in the rank, where $c$ is a parameter to be determined. We set $c = \epsilon/(r+1)$ for some constant $\epsilon > 0$ so that the additive error cannot increase the depth of $T_C$ by more than a constant factor. Precisely, because the obtained $\hat{\sigma}_k$ has the rank in the range 
$$
	[(k-\epsilon)|Q_z|/(r+1), (k+\epsilon)|Q_z|/(r+1)]
$$ 
for each $k \in [1, r]$, we need to replace Lemma~\ref{lem:balanced} with Corollary~\ref{cor:balanced}. Such a replacement increases the depth of $T_C$ from $O(\log_{r} n) = O(\delta^{-1})$ to $O(\log_{r/(1+\epsilon)} n) = O(\delta^{-1})+o(1)$.

\begin{corollary}\label{cor:balanced}
$|P_k| \le (\frac{2+2\epsilon}{r+1}-\frac{1}{(r+1)^2})|P| \le 2(1+\epsilon)|P|/(r+1) \mbox{ for each } k \in [1, r+1].$
\end{corollary}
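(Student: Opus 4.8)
The plan is to reprove Lemma~\ref{lem:balanced} almost verbatim, changing only the single step that used the exactness of the quantiles. Fix $k \in [1, r+1]$ and work with the current subproblem's point set $P = P_z$, refined into $P_1, \ldots, P_{r+1}$ by the approximate slopes $\hat\sigma_0 = -\infty \le \hat\sigma_1 \le \cdots \le \hat\sigma_r \le \hat\sigma_{r+1} = \infty$ (one first notes the estimates can be taken monotone — either because a Greenwald--Khanna summary answers rank queries monotonically, or by enforcing it — so that the cascade-pruning picture and Lemma~\ref{lem:cascade-pruning} apply unchanged). For a group $G_j$ with $\UH{G_j} = v_1, \ldots, v_t$, exactly as in Lemma~\ref{lem:balanced} the points the pruning contributes to $P_k$ from $G_j$ form a contiguous run $v_{\alpha_j}, \ldots, v_{\beta_j}$ with $v_{\alpha_j}$ (resp. $v_{\beta_j}$) supporting $\hat\sigma_{k-1}$ (resp. $\hat\sigma_k$); hence every edge $(v_i, v_{i+1})$ with $i \in [\alpha_j, \beta_j - 1]$ has slope in the open interval $(\hat\sigma_{k-1}, \hat\sigma_k)$, and these $n_j - 1$ edges (where $n_j = |P_k \cap G_j|$) are distinct members of the multiset $Q_z$. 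Summing over $j$, $\sum_j (n_j - 1)$ is at most the number of slopes of $Q_z$ lying strictly between $\hat\sigma_{k-1}$ and $\hat\sigma_k$.

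The only new ingredient is the bound on that count. Since $QS_z$ has additive rank error $c|Q_z| = \epsilon|Q_z|/(r+1)$, the true rank of $\hat\sigma_{k-1} = QS_z.\mathrm{query}((k-1)|Q_z|/(r+1))$ is at least $(k-1-\epsilon)|Q_z|/(r+1)$ and the true rank of $\hat\sigma_k$ is at most $(k+\epsilon)|Q_z|/(r+1)$, so the number of slopes of $Q_z$ strictly between them is at most
\[
\frac{(k+\epsilon)|Q_z|}{r+1} - \frac{(k-1-\epsilon)|Q_z|}{r+1} \;=\; \frac{(1+2\epsilon)|Q_z|}{r+1},
\]
and for $k \in \{1, r+1\}$ the same bound holds a fortiori since one endpoint is $\pm\infty$. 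This is exactly where the two one-sided errors combine into the factor $1+2\epsilon$ replacing the $1$ of Lemma~\ref{lem:balanced}.

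Plugging this in as in Lemma~\ref{lem:balanced}, with $|Q_z| \le r|P|/(r+1)$ and $|P|/(r+1)$ groups,
\[
|P_k| \;\le\; \sum_j n_j \;\le\; (1+2\epsilon)\frac{|Q_z|}{r+1} + \frac{|P|}{r+1} \;\le\; (1+2\epsilon)\frac{r|P|}{(r+1)^2} + \frac{|P|}{r+1} \;=\; \frac{(2+2\epsilon)r+1}{(r+1)^2}\,|P|,
\]
which is at most $\left(\frac{2+2\epsilon}{r+1} - \frac{1}{(r+1)^2}\right)|P| \le \frac{2(1+\epsilon)}{r+1}|P|$, as claimed. (Removing the points below the line through $s_{k-1}, s_k$ only shrinks $P_k$, so it is irrelevant for the upper bound.)

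There is no genuine obstacle here beyond Lemma~\ref{lem:balanced}; the one point needing care is the error bookkeeping — tracking that the lower quantile estimate can slip down by $\epsilon|Q_z|/(r+1)$ while the upper one slips up by the same amount, so their gap widens by $2\epsilon|Q_z|/(r+1)$ — together with the routine check that the estimates can be taken monotone so that the pruning argument carries over unchanged, and the verification that the resulting ratio still falls below the stated $\frac{2+2\epsilon}{r+1} - \frac{1}{(r+1)^2}$.
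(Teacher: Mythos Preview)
Your proposal is correct and follows exactly the approach the paper intends: the paper states the corollary without a separate proof, indicating only that the approximate quantile $\hat\sigma_k$ has rank in $[(k-\epsilon)|Q_z|/(r+1),(k+\epsilon)|Q_z|/(r+1)]$ and that one should ``replace Lemma~\ref{lem:balanced} with Corollary~\ref{cor:balanced}.'' You have carried out precisely that replacement, with the same slope-count argument and the only change being the widened gap $(1+2\epsilon)|Q_z|/(r+1)$ in place of $|Q_z|/(r+1)$; your added remarks on monotonicity of the estimates and the boundary cases $k\in\{1,r+1\}$ are useful housekeeping that the paper leaves implicit.
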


The summary $QS_z$ needs $O\left(\frac{1}{c} \log (c|Q_z|)\right)$ space, and therefore the space usage for each subproblem is $O((r/\epsilon) \log((\epsilon/r)|Q_z|))$. In~\cite{yccs16}, it shows that Greenwald and Khanna's quantile summary needs $O(\log |Q_z|)$ time for an update and $O(\log r + \log\log(|Q_z|/r))$ for a query. Because $S_z$ conducts $O(r)$ updates and $O(r)$ queries, we get:

\begin{lemma}\label{lem:getslope}
Given $P_z$, there exists a streaming algorithm that can obtain the $r$ approximate quantile slopes in $Q_z$ to within any constant factor using $O(r \log(|P_z|+r))$ time and $O(r \log(|P_z|/r))$ space.
\end{lemma}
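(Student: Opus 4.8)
\medskip
\noindent\textbf{Proof proposal.} The plan is to verify that Algorithm~\ref{algo:slope} meets the stated guarantees, treating the Greenwald--Khanna quantile summary $QS_z$ as a black box. The one structural point to make is that $Q_z$ itself is never stored: Algorithm~\ref{algo:slope} makes a single pass over $P_z$, buffering it into consecutive blocks $B_z$ of at most $r+1$ points (only the last possibly shorter); for each completed block it computes $\UH{B_z}=v_1,\ldots,v_t$, inserts the $t-1$ edge slopes $\sigma(v_i,v_{i+1})$ into $QS_z$, and increments a counter $q_z$. At the end of the pass the multiset inserted into $QS_z$ is exactly $Q_z$ and $q_z=|Q_z|$, so the $r$ queries $QS_z.\mathrm{query}(kq_z/(r+1))$, $k\in[1,r]$, return the claimed approximate $(r+1)$-quantiles $\hat\sigma_1,\ldots,\hat\sigma_r$. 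We instantiate $QS_z$ with error parameter $c:=\epsilon/(r+1)$ for an arbitrarily small positive constant $\epsilon$.

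For \textbf{correctness}, the Greenwald--Khanna guarantee makes $\hat\sigma_k$ a slope whose rank in $Q_z$ lies in the interval $[(k-\epsilon)|Q_z|/(r+1),\,(k+\epsilon)|Q_z|/(r+1)]$; since the exact $k$-th $(r+1)$-quantile of $Q_z$ has rank $k|Q_z|/(r+1)=\Theta(|Q_z|)$, $\hat\sigma_k$ matches it to within a $(1\pm\epsilon)$ factor in rank, with $\epsilon$ any constant — which is precisely the hypothesis that Corollary~\ref{cor:balanced} consumes. For the \textbf{resource bounds}, the memory in use is the buffer $B_z$ ($O(r)$), the output array ($O(r)$), and $QS_z$, which by the Greenwald--Khanna space bound occupies $O\!\big(\tfrac1c\log(c|Q_z|)\big)=O\!\big(\tfrac r\epsilon\log(\tfrac\epsilon r|Q_z|)\big)$; substituting $|Q_z|\le r|P_z|/(r+1)<|P_z|$ (hence $|Q_z|/r<|P_z|/(r+1)$) and $\epsilon=\Theta(1)$ collapses this to $O(r\log(|P_z|/r))$, which dominates the $O(r)$ terms once $|P_z|=\Omega(r)$ (and if $|P_z|=O(r)$ one simply retains $P_z$ in $O(r)$ space). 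For the running time, each of the $\ceil{|P_z|/(r+1)}$ blocks costs $O(r\log r)$ for its upper hull and emits at most $r$ slopes into $QS_z$, and the $r$ quantile queries come at the end; feeding the per-operation costs of~\cite{yccs16} ($O(\log|Q_z|)$ per update, $O(\log r+\log\log(|Q_z|/r))$ per query) and the bound $|Q_z|<|P_z|$ into this accounting yields the stated time.

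The \textbf{one subtlety worth flagging} is that the Greenwald--Khanna error is additive in rank while the lemma asks for the quantiles accurate to within a constant factor: since the $k$-th $(r+1)$-quantile sits at rank $k|Q_z|/(r+1)=\Theta(|Q_z|/r)$, the error parameter must be taken as $c=\Theta(\epsilon/r)$ (we use $c=\epsilon/(r+1)$) so that the additive slack $c|Q_z|=\Theta(\epsilon|Q_z|/r)$ is a constant fraction of that rank — a fixed constant $c$ would not do — and this same choice is what drives the $O(r\log(|P_z|/r))$ space bound. (Downstream one also checks, via Corollary~\ref{cor:balanced}, that the resulting $\pm\epsilon|Q_z|/(r+1)$ rank error inflates each $P_k$ by only a $(1+\epsilon)$ factor and hence the depth of $T_C$ from $O(\log_r n)$ to $O(\log_{r/(1+\epsilon)} n)=O(\delta^{-1})+o(1)$, but that lies outside this lemma.) Everything else is routine substitution into the~\cite{yccs16} bounds, valid in the regime $r\ll|Q_z|<|P_z|$ at hand.
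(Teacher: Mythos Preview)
Your approach is identical to the paper's: run Algorithm~\ref{algo:slope} with the Greenwald--Khanna summary at error $c=\epsilon/(r+1)$, invoke the $O\!\big(\tfrac1c\log(c|Q_z|)\big)$ space bound together with $|Q_z|<|P_z|$ to get $O(r\log(|P_z|/r))$, and cite~\cite{yccs16} for the per-operation time costs.

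The one place to tighten is your time accounting. You enumerate (i) $\ceil{|P_z|/(r+1)}$ block upper hulls at $O(r\log r)$ each, (ii) up to $|Q_z|<|P_z|$ insertions at $O(\log|Q_z|)$ each, and (iii) $r$ queries, and then assert that this ``yields the stated time'' $O(r\log(|P_z|+r))$. It does not: items (i) and (ii) alone already cost $\Omega(|P_z|\log r)$, which exceeds $r\log(|P_z|+r)$ whenever $|P_z|\gg r$. The paper's own sentence immediately before the lemma reads ``$S_z$ conducts $O(r)$ updates and $O(r)$ queries,'' which would indeed give $O(r\log|Q_z|)=O(r\log(|P_z|+r))$ but plainly does not match the $|Q_z|$ inserts actually performed in Algorithm~\ref{algo:slope}. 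The stated bound is consistent only with the query cost; the block-hull and insertion work is evidently meant to be absorbed into the separate $O(r\log|P_z|+|P_z|\log r)$ per-subproblem bound announced at the top of Section~\ref{sec:algo}. Either way, do not claim that the costs you yourself listed sum to $O(r\log(|P_z|+r))$---either restrict the lemma's time claim to the quantile-summary queries, or flag the discrepancy.
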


\subsection{Obtaining the $r$ suitable extreme points}

To find the $r$ suitable extreme points in $P_z$ (Line 12 in the RAM algorithm), a naive implementation, which would update the supporting points of $\hat{\sigma}_k$ for all $k \in [1, r]$ once for each point $p \in P_z$, needs $O(r|P_z|)$ running time. To reduce the running time to the claimed time complexity $O(r \log |P_z| + |P_z| \log r)$, we need the following observation.

\begin{observation}\label{obs:pred-succ}
For any non-singleton set $G$ whose extreme points in the upper hull $\UH{G}$ from left to right are $v_1, v_2, \ldots, v_t$, the point in $G$ that supports a given slope $\sigma$ is
$$
	s = \left\{
	\begin{array}{ll}
		v_1 & \mbox{ if } \sigma > \sigma(v_1, v_2) \\
		v_t & \mbox{ if } \sigma < \sigma(v_{t-1}, v_t) \\
		v_i & \mbox{ if } \sigma(v_{i-1}, v_i) \ge \sigma \ge \sigma(v_i, v_{i+1}) \mbox{ for some } i \in [2, t-2]
	\end{array}
	\right.
$$
\end{observation}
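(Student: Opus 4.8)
The plan is to reduce the statement to an elementary monotonicity argument about the $y$-intercepts of lines of slope $\sigma$ through the hull vertices. First I would record the two standard facts that make this work. (i) The point of $G$ that supports $\sigma$ is an extreme point of $\UH{G}$: the $y$-intercept of the line of slope $\sigma$ through $p=(x,y)$ equals $y-\sigma x$, which is a linear functional of $p$, so its maximum over $\conv{G}$ is attained on a face of $\conv{G}$; since the outward normal $(-\sigma,1)$ has positive second coordinate, that face lies on the upper hull, and hence every maximizer is one of $v_1,\dots,v_t$. (ii) The upper-hull edge slopes are strictly decreasing, $\sigma(v_1,v_2)>\sigma(v_2,v_3)>\cdots>\sigma(v_{t-1},v_t)$, which is just concavity of the upper hull (here $t\ge 2$ as implied by the notation).

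Then I would introduce $f(i):=y(v_i)-\sigma\,x(v_i)$ for $i\in[1,t]$, i.e.\ the $y$-intercept of the line of slope $\sigma$ through $v_i$, and compute the forward difference
\[
 f(i+1)-f(i)=\bigl(y(v_{i+1})-y(v_i)\bigr)-\sigma\bigl(x(v_{i+1})-x(v_i)\bigr)=\bigl(x(v_{i+1})-x(v_i)\bigr)\bigl(\sigma(v_i,v_{i+1})-\sigma\bigr).
\]
Since $x(v_{i+1})>x(v_i)$, the sign of $f(i+1)-f(i)$ equals the sign of $\sigma(v_i,v_{i+1})-\sigma$. Combined with (ii), the quantity $\sigma(v_i,v_{i+1})-\sigma$ changes sign at most once, and only from $+$ to $-$, as $i$ increases; hence $f$ is unimodal, strictly increasing while $\sigma(v_i,v_{i+1})>\sigma$ and strictly decreasing once $\sigma(v_i,v_{i+1})<\sigma$. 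Reading off the maximizer of $f$ gives exactly the three cases: if $\sigma>\sigma(v_1,v_2)$ then every edge slope is below $\sigma$, so $f$ is strictly decreasing and its maximum is $v_1$; if $\sigma<\sigma(v_{t-1},v_t)$ then every edge slope exceeds $\sigma$, so $f$ is strictly increasing and its maximum is $v_t$; otherwise there is an index $i$ with $\sigma(v_{i-1},v_i)\ge\sigma\ge\sigma(v_i,v_{i+1})$, and $f$ attains its maximum at $v_i$, which by (i) is the supporting point of $\sigma$.

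The only delicate point — and the part I would write out most carefully — is the boundary behaviour when $\sigma$ equals an edge slope exactly, say $\sigma=\sigma(v_i,v_{i+1})$. Then $f(i)=f(i+1)$, both $v_i$ and $v_{i+1}$ satisfy the displayed inequalities, and the supporting point is genuinely non-unique until the tie-breaking rule (pick the candidate with the largest $y$-coordinate) is invoked; I would note that this rule selects one of the two and that either choice is consistent with the stated characterisation. I would also reconcile the index range for the middle case: the argument naturally produces $i\in[2,t-1]$, since both neighbouring edge slopes $\sigma(v_{i-1},v_i)$ and $\sigma(v_i,v_{i+1})$ must be defined, with the endpoints $i=1$ and $i=t$ being precisely the first two cases. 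Apart from these bookkeeping remarks, the proof is just the one-line difference computation above together with facts (i) and (ii), so I do not expect any real obstacle.
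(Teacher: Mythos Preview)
Your argument is correct and is the standard justification for this fact; the paper itself states it as an \emph{observation} and gives no proof, so there is nothing to compare against. Your remark that the middle case should naturally range over $i\in[2,t-1]$ rather than $[2,t-2]$ is well taken and appears to be a typo in the paper.
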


To find the extreme points in $P_z$ that supports $\hat{\sigma}_k$ for all $k \in [1, r]$, we compute the extreme points $v_1, v_2, \ldots, v_t$ in $P_z$ from left to right, generate a (sorted) list $\ell_A$ of slopes $\sigma(v_1, v_2), \sigma(v_2, v_3), \ldots, \sigma(v_{t-1}, v_t)$, and merge $\ell_A$ with another (sorted) list $\ell_B$ of the approximate $(r+1)$-quantile slopes $\hat{\sigma}_1, \hat{\sigma}_2, \ldots, \hat{\sigma}_r$. By Observation~\ref{obs:pred-succ}, the point $\hat{s}_k$ in $P_z$ that supports $\hat{\sigma}_k$ for each $k \in [1, r]$ can be easily determined by the its predecessor and successor in $\ell_A$. Scanning the merged list suffices to get $\hat{s}_1, \hat{s}_2, \ldots, \hat{s}_k$. Though the above reduces the time complexity to $O(r + |P_z| \log |P_z|)$, the space complexity $O(|P_z|)$ is much higher than the claimed space complexity $O(r \log r|P_z|)$ for $r \ll |P_z|$. To remedy, again, we reduce this problem to computing the upper hulls of $|P_z|/(r+1)$ smaller point sets. First, we partition $P_z$ arbitrarily into $G_1, G_2, \ldots, G_{|P_z|/(r+1)}$ so that each group $G_i$ has size $|G_i| \in [1, r+1]$ points. Then, for each $G_i$ we apply the above accordingly, detailed in Algorithm~\ref{algo:support}. We get:

\begin{lemma}\label{lem:getextp}
Given $P_z$ and sorted $\sigma_1, \sigma_2, \ldots, \sigma_r$, there exists a streaming algorithm that can obtain the extreme points in $P_z$ that support $\sigma_i$ for all $i \in [1, r]$ using $O(r+|P_z|\log r)$ time and $O(r)$ space.
\end{lemma}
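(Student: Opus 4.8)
The plan is to reduce the problem to $|P_z|/(r+1)$ independent upper-hull computations on point sets of size at most $r+1$, together with a single merge against the sorted list of query slopes $\sigma_1,\dots,\sigma_r$. First I would partition $P_z$ arbitrarily, as we stream through it, into groups $G_1,G_2,\dots,G_{|P_z|/(r+1)}$ each of size in $[1,r+1]$; this costs only $O(r)$ working space at any time, since we buffer one group, process it, and discard it. For each group $G_i$ I compute its upper hull $v_1,\dots,v_t$ (with $t\le r+1$) using any $O(r\log r)$-time RAM convex-hull routine, which yields the sorted slope list $\ell_A^{(i)}=(\sigma(v_1,v_2),\dots,\sigma(v_{t-1},v_t))$ for free, in nonincreasing order. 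By Observation~\ref{obs:pred-succ}, the point of $G_i$ supporting a given slope $\sigma$ is determined by where $\sigma$ falls among the breakpoints of $\ell_A^{(i)}$: it is $v_1$ if $\sigma$ exceeds the first slope, $v_t$ if $\sigma$ is below the last, and otherwise the vertex $v_j$ bracketed by two consecutive slopes. Hence a single simultaneous scan of the merged sequence of $\ell_A^{(i)}$ and the (already sorted) query list $\ell_B=(\sigma_1,\dots,\sigma_r)$ identifies, for every $k\in[1,r]$, the point $\hat s_k^{(i)}\in G_i$ supporting $\sigma_k$.

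The second ingredient is to combine the per-group supporting points into the global ones. For each $k$, the extreme point of $P_z$ supporting $\sigma_k$ is the point among $\{\hat s_k^{(1)},\dots,\hat s_k^{(|P_z|/(r+1))}\}$ whose line of slope $\sigma_k$ has the largest $y$-intercept (breaking ties toward larger $y$-coordinate), because the support of a union is the best support among the parts. So I maintain an array of $r$ ``current best'' candidates $\hat s_1,\dots,\hat s_r$, initialized to $p_L$/undefined, and after processing each group $G_i$ I update each $\hat s_k$ against $\hat s_k^{(i)}$. The per-group work is $O(r\log r)$ for the hull plus $O(r + t)=O(r)$ for the merge-scan and the $r$ updates; summed over $|P_z|/(r+1)$ groups this is $O(|P_z|\log r)$. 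The only step that is not $O(|P_z|\log r)$ in this accounting is reading/merging against the length-$r$ query list inside every group, which naively is $O(r)$ per group and hence $O(r\cdot |P_z|/(r+1))=O(|P_z|)$ total — already within budget — but to be safe one can instead scan $\ell_B$ once globally and, per group, only touch the $O(t)$ slots where a group breakpoint falls, giving the claimed additive $O(r+|P_z|\log r)$ rather than a multiplicative blow-up.

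For the space bound: at any moment we hold one group $G_i$ and its hull ($O(r)$ points), a position pointer into $\ell_B$, and the array of $r$ candidate supporting points $\hat s_1,\dots,\hat s_r$ — total $O(r)$. We never store $\ell_B$ beyond what the caller already provides on the tape/stream, and we never store more than one group at a time, so nothing of size $\Theta(|P_z|)$ is retained. I expect the main obstacle to be bookkeeping the merge so that the $r$ query slopes are consumed in amortized $O(1)$ each across all groups (not $O(r)$ per group) in order to hit the sharp $O(r+|P_z|\log r)$ time rather than $O(|P_z|\log r + r\cdot|P_z|/r)$ — a distinction that only matters when $r$ is close to $|P_z|$, and which is handled by scanning $\ell_B$ and $\bigcup_i \ell_A^{(i)}$ in a single coordinated pass, advancing the $\ell_B$ pointer monotonically. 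The correctness of that single pass rests entirely on Observation~\ref{obs:pred-succ} applied group by group, plus the elementary fact that support of a union is the coordinatewise-best support, so no further geometric lemmas are needed.
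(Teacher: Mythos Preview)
Your approach is essentially identical to the paper's: partition $P_z$ into groups of size at most $r+1$, compute each group's upper hull, merge the hull-edge slopes with the sorted query slopes, read off the per-group supporting points via Observation~\ref{obs:pred-succ}, and keep a running best for each $\sigma_k$. The space and time accounting you give in the first two paragraphs is correct and matches Algorithm~\ref{algo:support}.

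The ``main obstacle'' you raise in the last paragraph is not real, and your proposed fix for it is actually wrong. The naive per-group merge already costs $O(r+t)=O(r)$ per group and hence $O(r\cdot\lceil |P_z|/(r+1)\rceil)=O(r+|P_z|)$ in total, which is dominated by the $O(|P_z|\log r)$ hull cost; this is precisely what the paper does. Your alternative of ``advancing the $\ell_B$ pointer monotonically'' across $\bigcup_i \ell_A^{(i)}$ cannot work: each group must report a supporting point for \emph{every} one of the $r$ query slopes (the per-group breakpoints in $\ell_A^{(i)}$ bear no monotone relation to those of $\ell_A^{(i+1)}$), so the $\ell_B$ pointer must be reset for each group. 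Drop that paragraph and the argument is complete.
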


\begin{algorithm}[H]
 $B_z \leftarrow \emptyset$\; 
 $\hat{s}_k \leftarrow (0, -\infty)$ for each $k \in [1, r]$\;
 \ForEach{$p$ in $P_z$}{
   $B_z \leftarrow B_z \cup \{p\}$\;
	 \If{ $|B_z|$ equals $r+1$ \textbf{or} $p$ is the last point in $P_z$}{
	 	Compute $\UH{B_z}$ and obtain its extreme points from left to right, $v_1, v_2, \ldots, v_t$\;
		$\ell_A \leftarrow \sigma(v_1, v_2), \sigma(v_2, v_3), \ldots, \sigma(v_{t-1}, v_t)$\;
		$\ell_B \leftarrow \hat{\sigma}_1, \hat{\sigma}_2, \ldots, \hat{\sigma}_r$\;
		Merge $\ell_A$ and $\ell_B$\;
		\ForEach{$k = 1$ \KwTo $r$}{
			Find the predecessor and successor of $\hat{\sigma}_k$ in $\ell_A$ by scanning the merged list\;
			By which and Observation~\ref{obs:pred-succ}, obtain the point $b_k$ in $B_z$ that supports $\hat{\sigma}_k$\;
			$\hat{s}_k \leftarrow$ the point in $\{\hat{s}_k, b_k\}$ that supports $\hat{\sigma}_k$\;
		}	
	 }
 }
\caption{Compute the $r$ suitable extreme points for the subproblem $S_z$. \label{algo:support}}
\end{algorithm}

\subsection{Generating the input point set $P_z$ for each subproblem $S_z$}

Recall that we execute the subproblems in $T_C$ in BFS order. Upon executing the subproblems of depth $d$ for any $d > 0$, all the subproblems of depth less than $d$ are done and the associated $r$ quantile slopes and $r$ suitable extreme points are memoized in memory. For $d = 0$, we need to generate the input for the initial problem $S_o$. Because its input point set is exactly $P$, scanning over $P$ suffices. 

Given the associated $r$ quantile slopes and $r$ suitable extreme points for all the subproblems of depth less than $d$, to generate the input point sets for all the subproblems of depth $d$, we can directly execute Lines 15-26 in the RAM algorithm for all the subproblems of depth less than $ d$ and ignore Lines 1-14 because the intermediate values, the quantile slopes and suitable extreme points, are already computed and kept in memory. Initially, we allocate a buffer $B_z$ of size $r+1$ for each subproblem $S_z$ of depth less than $d$ so as to temporarily store the incoming input points, i.e. points in $P_z$. Then, we scan $P$ on the input tape once and for each input point $p$ in $P$, we place $p$ in the buffer $B_o$ of $S_o$. Once any buffer $B_z$ gets full or the input terminates, we let $B_z$ be some $G_i$, a part in the partition of $P_z$, and apply the pruning procedure stated in Lines 15-26 in the RAM algorithm. Those points that survive the pruning are flushed, one by one, into the buffers of $S_z$'s child subproblems. We apply the above iteratively until we reach the end of the input tape. The space usage counted on each $S_z$ is $O(|B_z|) = O(r)$ and the overall running time to generate the input point set for all the subproblems of depth $d > 0$ is $O(d n \log r)$ because all the subproblems of each depth $i \in [1, d-1]$ computes the upper hull of points sets, disjoint subsets of $P$. Hence, we get:

\begin{lemma}\label{lem:geninput}
There exists a streaming algorithm that can generate the input for all the subproblems on $T_C$ of depth $d$ for each $d \in [0, \depth{T_C}]$ using $O(1)$ passes, $O(hr)$ space, and $O(d n \log r)$ time.  
\end{lemma}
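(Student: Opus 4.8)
The plan is to run, in a single streaming pass over the input tape $P$, a simulation of the refinement phase (Lines~15--26 of the RAM algorithm) carried out simultaneously for every subproblem of depth $<d$, so that the points surviving the deepest such refinements are exactly the inputs $P_z$ of the depth-$d$ subproblems. First I would dispose of $d=0$: the unique depth-$0$ subproblem has input $P$ itself, so one scan and $O(1)$ extra space suffice.

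For $d\ge 1$ I would use the BFS execution order: by the time level $d$ is reached, every subproblem of depth $<d$ has been solved, so its $r$ quantile slopes $\hat\sigma_1^{(z)},\dots,\hat\sigma_r^{(z)}$ and supporting extreme points $\hat s_0^{(z)},\dots,\hat s_{r+1}^{(z)}$ --- precisely what Lines~1--14 compute --- are already memoized, and regenerating the $P_z$'s requires only re-executing Lines~15--26. The data structure is a pipeline of buffers: one buffer $B_z$ of capacity $r{+}1$ for every \emph{non-leaf} subproblem $S_z$ with $\mathrm{depth}(S_z)<d$. Before the scan I pre-load each such buffer with the two memoized boundary points $\hat s_{k-1},\hat s_k$ that the RAM algorithm adjoins to $P_k$ when it recurses; the emitted depth-$d$ inputs will similarly begin with their own two inherited boundary points. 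Then I scan $P$ once, pushing each incoming point into the root buffer $B_o$. Whenever a buffer $B_z$ becomes full (or the scan ends and $B_z\ne\emptyset$), I regard its contents as the next group $G_i$ in the arbitrary partition of $P_z$: compute $\UH{G_i}$, find inside $\UH{G_i}$ the local supports of $\hat\sigma_{k-1}^{(z)}$ and $\hat\sigma_k^{(z)}$ by one merge of the $O(r)$ hull-edge slopes with the memoized quantile slopes (Observation~\ref{obs:pred-succ}), apply the two-fold pruning of Lemma~\ref{lem:cascade-pruning} exactly as in Step~2, and flush the surviving points into the buffers of $S_z$'s children --- discarding those bound for leaf children and possibly triggering further flushes along the way. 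Whatever is flushed out of a depth-$(d-1)$ buffer is emitted as the generated input stream of the corresponding depth-$d$ subproblem.

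Correctness I would prove by induction on depth, showing that the multiset of points that ever enter $B_z$ equals $P_z$. The base case is $P_o=P$; for the step, the $G_i$'s form a partition of $P_z$ into parts of size in $[1,r{+}1]$ (a buffer is flushed only when full or at end of stream and never when empty), and on each $G_i$ the simulation runs exactly the RAM algorithm's pruning, so the points flushed into a child buffer $B_{z'}$, together with its pre-loaded pair, are exactly $P_k\cup\{s_{k-1},s_k\}=P_{z'}$; I would invoke Lemmas~\ref{lem:cascade-pruning}--\ref{lem:balanced} for the facts that this pruning preserves $\UH{P}$ and keeps $|P_k|$ small, rather than re-proving them. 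For the resources: the entire pipeline runs inside one left-to-right scan, giving $O(1)$ passes; each buffer and each scratch upper hull or merge list has size $O(r)$ and there are at most $|T_C|=O(\delta^{-1}h)$ subproblems of depth $<d$ (Lemma~\ref{lem:internal}), plus their $O(r)$-size memoized slope and support lists, so the space is $O(\delta^{-1}hr)=O(hr)$ for constant $\delta$; and since only subproblems with $|P_z|\ge r$ are non-leaves, each contributing subproblem yields $O(|P_z|/r)$ groups, and because at each depth $i<d$ the sets $P_z$ are disjoint subsets of $P$ up to the $O(\delta^{-1}h)$ boundary points counted once per edge of $T_C$, there are $O(n/r)$ groups at level $i$, each processed in $O(r\log r)$ time (the upper hull $O(r\log r)$, the merge and pruning $O(r)$), for $O(n\log r)$ per level and $O(dn\log r)$ over the $d$ levels.

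The step I expect to be the main obstacle is precisely this correctness bookkeeping: checking that the groups $G_i$ improvised as buffer contents form a legitimate partition of $P_z$ with every part of size in $[1,r{+}1]$; that the cascade of nested buffer flushes terminates and faithfully reproduces the RAM algorithm's refinement, including the adjunction of the shared extreme points $s_{k-1},s_k$ to each child; and that skipping Lines~1--14 is justified because the quantile slopes and suitable extreme points of all shallower subproblems were memoized in earlier rounds. Once this simulation is pinned down, the pass, space, and time counts are routine.
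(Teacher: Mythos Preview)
Your proposal is correct and follows essentially the same approach as the paper: a single-pass pipeline of $(r{+}1)$-size buffers $B_z$, one per subproblem of depth $<d$, using the memoized quantile slopes and supporting points to re-execute only Lines~15--26 of the RAM algorithm and flush survivors down to the children. Your write-up is in fact more careful than the paper's own justification---you make explicit the pre-loading of the two boundary points $s_{k-1},s_k$, the inductive correctness argument that buffer contents realize a legitimate partition of $P_z$, and the accounting for the $O(\delta^{-1}h)$ shared boundary points in the time bound---but the underlying construction and the resource analysis are identical.
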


\begin{proof}[Proof of Theorem~\ref{thm:read-only}]
For $r = n^\delta$, $T_C$ has $O(\delta^{-1}h)$ nodes and depth $O(\delta^{-1})$ by Lemmas~\ref{lem:internal} and~\ref{lem:balanced}. Hence, the space complexity of our streaming algorithm is the sum of $O(\delta^{-1}h)$ times the space complexity in Lemmas~\ref{lem:getslope} and~\ref{lem:getextp}, and $O(\delta^{-1})$ times the space complexity in Lemma~\ref{lem:geninput}. The overall space complexity is $O(\delta^{-1} h n^\delta\log n)$. One can obtain the space bound $O(\min\{\delta^{-1} hn^\delta\log n, n\})$ by checking whether $\hbar n^\delta \log n > n$ before proceeding to the subproblems on the next depth, where $\hbar$ is the number of subproblems executed so far and thus $\hbar = O(\delta^{-1}h)$. If so, we compute the convex hull by a RAM algorithm. 

Analogously, we have that the pass (resp. time) complexity of our streaming algorithm is $O(\delta^{-1})$ (resp. $O(\delta^{-2} n \log n)$)
\end{proof}

\section{A W-Stream Algorithm Of Nearly-Optimal Pass-Space Tradeoff}\label{sec:wrstr}

Demetrescu et al.~\cite{cbga10} establish a general scheme to convert PRAM algorithms to W-stream algorithms. Theorem~\ref{thm:parallel} is an implication of their main result. 

\begin{theorem}[Demetrescu et al.~\cite{cbga10}] \label{thm:parallel}
If there exists a PRAM algorithm that uses $m$ processors to compute the convex hull of $n$ given points in $t$ rounds, then there exists an $O(s)$-space $O(mt/s)$-pass W-stream algorithm to compute the convex hull.
\end{theorem}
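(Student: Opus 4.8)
The statement is a specialization of the general PRAM-to-W-stream simulation of Demetrescu et al.; the plan is to carry out that simulation on the convex hull PRAM algorithm and instantiate its parameters. We store on the writable tape an explicit encoding of the current PRAM configuration as a list of $O(n+m)$ tuples, each of $O(\log(n+m))$ bits: a tuple $(\mathrm{M}, a, v)$ for every memory cell $a$ currently holding value $v$ (the $n$ input points together with the scratch cells the algorithm touches, which for planar convex hull is $O(n)=O(m)$ cells), and a tuple $(\mathrm{P}, i, q_i)$ for the local state $q_i$ of processor $i$, $i\in[1,m]$. We then simulate the $t$ synchronous rounds one after another, and at the end emit the contents of the designated output cells in one extra pass.

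One round is simulated with $O(s)$ memory by the standard \emph{sort-and-scan} routing idiom, in a constant number of sorting phases. (i) Scanning the tape we emit, for each processor $i$, its $O(1)$ read requests $(\mathrm{R}, a, i)$ for the cells $a$ it needs this round. (ii) Sorting the $\mathrm{M}$- and $\mathrm{R}$-tuples by address $a$ places each request immediately after the cell it reads, so one pass converts it into an answer $(\mathrm{A}, i, a, v)$. (iii) Sorting the $\mathrm{A}$- and $\mathrm{P}$-tuples by processor index $i$ makes processor $i$'s state and all of its answers contiguous, so one pass evaluates processor $i$'s transition function (using $O(1)$ words), emitting its new state tuple and its $O(1)$ write requests $(\mathrm{W}, a, v', i)$. (iv) Sorting the $\mathrm{W}$-tuples by address and scanning resolves concurrent writes according to the PRAM write model (common/arbitrary/priority value among the block of writes to the same $a$ for CRCW; nothing to resolve for CREW/EREW), after which a merge updates the $\mathrm{M}$-tuples; dead tuples are discarded so the tape stays $O(n+m)$ long.

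Sorting $N=O(n+m)$ tuples on a single sequential tape with $O(s)$ memory takes $O(N/s)$ passes (repeatedly scan the tape, extracting and outputting the $s$ smallest not-yet-output tuples), so each round costs $O((n+m)/s)=O(m/s)$ passes, and all $t$ rounds cost $O(mt/s)$ passes within $O(s)$ space, as claimed. For the standard parallel convex hull algorithms the output cells already list $\ext{P}$ in clockwise order, so the final emitting pass needs no extra work.

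The only real obstacle is that sequential tape access forbids the processors from addressing shared memory directly; the three sorting phases are exactly what emulate the PRAM's shared-memory fabric, and the points one must actually check are (a) that $O(1)$ sorts per round suffice, (b) that the read/write semantics of whichever PRAM variant the convex hull algorithm uses are faithfully implemented by the scan over each sorted block of requests, and (c) that the tuple widths and the garbage collection of dead tuples keep the tape length $O(n+m)$ and the per-tuple space $O(1)$ words, so that the $O(N/s)$ sorting bound genuinely applies. These are routine, and are precisely the facts made rigorous in the Demetrescu et al. framework, which we may simply invoke.
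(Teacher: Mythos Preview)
The paper does not prove this theorem at all; it simply quotes it as a consequence of the general PRAM--to--W-stream simulation of Demetrescu et al.\ and moves on. So there is no ``paper's proof'' to match, and your decision to sketch the underlying simulation is reasonable.

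That said, your sketch has a quantitative gap. You encode the configuration with $N=\Theta(n+m)$ tuples and then argue that one PRAM round costs $O(N/s)$ passes of sorting, concluding that ``each round costs $O((n+m)/s)=O(m/s)$ passes.'' The last equality is unjustified: nothing in the hypothesis forces $n=O(m)$, and in fact the very next statement in the paper (Corollary~\ref{cor:auto}) instantiates the theorem with Akl's algorithm, where $m=n^{\epsilon}\ll n$. With your accounting that instantiation would give $O((n/s)\cdot t)=O((n^{2-\epsilon}/s)\log h)$ passes, not the claimed $O((n/s)\log h)$. The culprit is that you re-sort the $\Theta(n)$ memory tuples every round even though only $O(m)$ of them are touched.

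The standard fix, which is what Demetrescu et al.\ actually do, is to make the per-round cost depend on $m$ rather than on the tape length: either (i) keep the memory tuples permanently sorted by address and, in each round, sort only the $O(m)$ fresh request/answer/write tuples in $O(m/s)$ passes and then merge them into the sorted memory in a single pass; or, more directly, (ii) dispense with global sorting and simulate the $m$ processors in $\lceil m/s\rceil$ batches of $s$, each batch handled in $O(1)$ passes by scanning the (address-sorted) tape once to service its $O(s)$ reads and once to merge in its $O(s)$ writes. Either route yields $O(m/s)$ passes per round and hence $O(mt/s)$ overall, independent of $n$ and of the PRAM's total memory footprint. Your points (a)--(c) at the end are fine once the routing is organized this way.
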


There is a long line of research that studies how to compute the convex hull of $n$ given points efficiently in parallel~\cite{chow80,ag86,jag86,akl84,gg91,gs97}. In particular, Akl's PRAM algorithm~\cite{akl84} uses $O(n^\epsilon)$ processors and runs in $O(n^{1-\epsilon}\log h)$ time for any $\epsilon \in (0, 1)$. Converting Akl's PRAM algorithm to a W-stream algorithm by Theorem~\ref{thm:parallel}, we have:

\begin{corollary}\label{cor:auto}
There exists an $O((n/s)\log h)$-pass W-stream algorithm that can compute the convex hull of $n$ given points using $O(s)$ space. 
\end{corollary}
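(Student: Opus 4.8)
The plan is to derive Corollary~\ref{cor:auto} by instantiating Theorem~\ref{thm:parallel} with Akl's PRAM convex hull algorithm~\cite{akl84}, which operates in the algebraic decision tree (RealRAM) model assumed by the conversion scheme. Given a target space bound $s$, say with $2 \le s \le n$, I would first choose the parameter of Akl's algorithm to be $\epsilon = \log s / \log n$ (base-$2$ logarithms), which lies in $(0,1)$ for $s$ in this range. With this choice one has $n^\epsilon = s$ and $n^{1-\epsilon} = n/s$, so Akl's algorithm uses $m = O(n^\epsilon) = O(s)$ processors and runs in $t = O(n^{1-\epsilon}\log h) = O((n/s)\log h)$ parallel rounds.

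Next I would plug these parameters into Theorem~\ref{thm:parallel}: a PRAM algorithm with $m$ processors and $t$ rounds yields an $O(s)$-space W-stream algorithm making $O(mt/s)$ passes. Substituting gives
$$
O\!\left(\frac{mt}{s}\right)
= O\!\left(\frac{s\cdot (n/s)\log h}{s}\right)
= O\!\left(\frac{n}{s}\log h\right)
$$
passes with $O(s)$ space, which is exactly the claimed bound. Correctness is inherited from Theorem~\ref{thm:parallel}, since the simulation faithfully reproduces the output of Akl's algorithm and hence outputs $\ext{P}$.

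There is no real combinatorial obstacle here; the subtleties are purely bookkeeping. The main one is controlling the admissible range of $s$: for $s = n^{\Omega(1)}$ the exponent $\epsilon = \log s/\log n$ is bounded away from $0$ and $1$, so the hidden constants in Akl's bounds (and in the simulation of Theorem~\ref{thm:parallel}) are uniform; for smaller $s$ one simply fixes $\epsilon$ to a positive constant and runs the resulting $\Theta(n^\epsilon)$-processor, $O(n^{1-\epsilon}\log h)$-round algorithm, which still yields $O((n/s)\log h)$ passes once $s = O(n^\epsilon)$. A secondary point is the bit-versus-point accounting noted after Theorem~\ref{thm:lower}: here $s$ counts points, matching the cost model under which Akl's algorithm is analyzed, so no discretization penalty arises. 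Assembling these observations completes the proof.
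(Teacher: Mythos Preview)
Your proposal is correct and follows exactly the paper's approach: the paper derives the corollary in one line by plugging Akl's $O(n^\epsilon)$-processor, $O(n^{1-\epsilon}\log h)$-time PRAM algorithm into Theorem~\ref{thm:parallel}. Your explicit choice $\epsilon = \log s/\log n$ and the remarks on the range of $s$ merely spell out details the paper leaves implicit.
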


The optimal work, i.e., the total number of primitive operations that the processors perform, for any parallel algorithm in the algebraic decision tree model\footnote{Roughly speaking, algorithms are decision trees in which each computation node is able to test the sign of the evaluation of a constant-degree polynomial.} 
to compute the convex hull is $O(n \log h)$~\cite{ks86,gs97}. Therefore the W-stream algorithm stated in Corollary~\ref{cor:auto} is already the best possible among those W-stream algorithms that are converted from a PRAM algorithm in the algebraic decision tree model by Theorem~\ref{thm:parallel}. However, in this Section, we will show that such a tradeoff between pass complexity and space usage is suboptimal by devising a W-stream algorithm that has a better pass-space tradeoff. Together with the results shown in Section~\ref{sec:lower}, we have that the pass-space tradeoff of our W-stream algorithm is nearly optimal.

\subsection{Deterministic W-stream Algorithm}

Our deterministic W-stream algorithm is the same as our streaming algorithm, except for the following differences:

\begin{itemize}
\item We set $r = 1$ (rather than $r = n^\delta$) for our deterministic W-stream algorithm. Thus, by Corollary~\ref{cor:balanced} $\depth{T_C}$ increases from $O(\delta^{-1})$ to $O(\log n)$, but the space usage of subproblem $S_z$ decreases from $O(n^\delta \log n)$ to $O(\log n)$ for each $z \in [1, |T_C|]$. Moreover, if the extreme point in the input $P$ that supports the approximate median slope is the leftmost point $p_L$ or the rightmost point $p_R$, i.e. the degenerate case, we replace it with the extreme point that supports $\sigma(p_L, p_R)$. In this way, each subproblem on $T_C$ has a unique extreme point and therefore the number of subproblems on $T_C$ is $O(h)$.

\item Our streaming algorithm executes the subproblems on $T_C$ in BFS order, that is, all subproblems of depth $d$ are executed in a round for each $d \in [0, \depth{T_C}]$. In contrast, our deterministic W-stream algorithm refines a single round into subrounds, in each of which it takes care of $O(s/\log n)$ subproblems, so as to bound the working space by $O(s)$.

\item Note that algorithms in the W-stream model are capable of modifying the input tape. Formally, while scanning the input tape in the $i$-th pass, algorithms can write something on a write-only output stream; in the $(i+1)$-th pass, the input tape read by algorithms is the output tape written in the $i$-th pass. Hence, our deterministic W-stream algorithm is able to assign an attribute to each point $p \in P$ to indicate that $p$ is an input of a certain subproblem. Moreover, our deterministic W-stream algorithm can write down the parameters for every subproblem on the output tape. In each subround, our deterministic W-stream algorithm needs to scan the input twice. The first pass is used to load the parameters of subproblems to be solved in the current subround. The second pass is used to scan the input tape and process the points that are the input points for the subproblems to be solved in the current subround.  
\end{itemize}

We are ready to prove Theorem~\ref{thm:det}.  

\begin{proof}[Proof of Theorem~\ref{thm:det}]
Suppose there are $h_d$ subproblems of depth $d$ on $T_C$ for each $d \in [0, \depth{T_C}]$, then our deterministic W-stream algorithm has to execute 
$$
	\sum_{d \in [0, \depth{T_C}]} \left\lceil \frac{h_d}{\lfloor s/\Theta(\log n) \rfloor} \right\rceil = O\left(\lceil h/s \rceil \log n\right)
$$
subrounds for any $s = \Omega(\log n)$. Because our deterministic W-stream algorithm scans the input tape twice for each subround, the pass complexity is $O(\lceil h/s \rceil \log n)$. 

As shown in Section~\ref{sec:algo}, subproblem $S_z$ needs $O(|P_z| \log |P_z|)$ running time. Since the input of subproblems of depth $d$ on $T_C$ are disjoint subsets of $P$, for each $d \in [0, |T_C|]$. We get that the time complexity is $O(n \log^2 n)$. 
\end{proof}

\subsection{Randomized W-stream Algorithm}

Observe that for $r = 1$, finding the $r$ approximate quantile slopes in $Q_z$ is exactly finding the approximate median slope in $Q_z$. Our algorithms mentioned previously all use Greenwald and Khanna quantile summary structure, which needs $O(\log n)$ space. In our randomized W-stream algorithm, we replace the Greenwald and Khanna quantile summary with a random slope in $Q_z$, thereby reducing the space usage to $O(1)$. As noted by Bhattacharya and Sen~\cite{bs97}, such a replacement cannot increase the depth of $T_C$ by more than a constant factor w.h.p. Consequently, we get Theorem~\ref{thm:rand}.

\begin{proof}[Proof of Theorem~\ref{thm:rand}]
Similar to the arguments used in the proof of Theorem~\ref{thm:det}, the pass complexity of our randomized W-stream algorithm is
$$
	\sum_{d \in [0, \depth{T_C}]} \left\lceil \frac{h_d}{\lfloor s/\Theta(1) \rfloor} \right\rceil = O\left(h/s+\log n\right)
$$
for any $s = \Omega(1)$ w.h.p. and the time complexity is $O(n \log^2 n)$ w.h.p.
\end{proof}

\newcommand{\tcite}{Kalyanasundaram and Schintger~\cite{ks92}\xspace}

\section{Unconditional Lower Bound}\label{sec:lower}

In this section, we will show that any streaming (or W-stream) algorithm that can compute the convex hull with success rate $> 2/3$ using $O(s)$ space requires $\Omega(\ceil{h/s})$ passes (i.e. Theorem~\ref{thm:lower}). This establishes the near-optimality of our proposed algorithms. We note here that the lower bound holds even if the output is the quantity $|\ext{P}|$, rather than the set $\ext{P}$.

\vspace{-0.5cm}

\begin{figure}[ht]
\centering
\begin{subfigure}[t]{0.4\textwidth}
\centering
\psset{unit=0.6, linewidth=0.015,dotsize=1.7pt}
\begin{pspicture}(-6,-1)(6,6)
\psline(-5,0)(5,0)
\psarc(0,0){5}{0}{180}
\psline(-4,3)(0,0) \psline(0,0)(4,3)
\psline(-4.7434,1.5811)(0,0) \psline(4.7434,1.5811)(0,0)
\psline(-5,0)(-4,3) \psline(5,0)(4,3)
\rput(0,-0.5){$O$} \rput(5.5,0){$a_0$} \rput(-5.8,0){$a_{n+1}$}
\rput(5.2,1.5811){$a_1$} \rput(-5.2,1.5811){$a_n$} \rput(4.5,3){$a_2$} \rput(-4.7,3){$a_{n-1}$}
\psarc(0,0){4.87}{0}{180}
\psdots[dotstyle=*,linecolor=red](-5,0)(-4.7434,1.5811)(-4,3)(4,3)(4.7434,1.5811)(5,0)
\psdots[dotstyle=*,linecolor=red](-4.87,0)(-4.61,1.54)(-3.89,2.922)(3.89,2.922)(4.61,1.54)(4.87,0)
\rput(4.5,0.3){$b_0$} \rput(-4.2,0.3){$b_{n+1}$} \rput(3.5,2.922){$b_2$} \rput(-3.1,2.922){$b_{n-1}$} \rput(4.1,1.6){$b_1$} \rput(-3.8,1.6){$b_n$}
\psdots[dotstyle=*,dotsize=2pt](-0.6,3.8)(0,3.8)(0.6,3.8)
\end{pspicture}
\caption{The arrangement of points.}
\label{fig:hardInstance}
\end{subfigure}%
\hfill
\begin{subfigure}[t]{0.4\textwidth}
\centering
\psset{unit=0.6,linewidth=0.015,dotsize=1.7pt}
\begin{pspicture}(-4,-1)(8,6)
\psline(0,0)(-3,4) \psline(0,0)(3,4) \psline(0,0)(0,5)
\psarc(0,0){5}{53.13}{126.87}
\psarc(0,0){4.87}{53.13}{126.87}
\psline(-4,5)(4,5) \psline(-4,4.87)(4,4.87) \psline(-2.922,3.896)(2.922,3.896)
\psdots[dotstyle=*, linecolor=red](-3,4)(0,5)(3,4)(-2.922,3.896)(0,4.87)(2.922,3.896)(0,3.896)
\rput(0,-0.5){$O$} \rput(0,5.5){$a_i$} \rput(3.7,4){$a_{i-1}$} \rput(-3.7,4){$a_{i+1}$} \rput(2.1,3.596){$b_{i-1}$} \rput (0.3,4.5){$b_i$} \rput(-2.1,3.596){$b_{i+1}$} \rput(0.3,3.596){$T$}
\end{pspicture}
\caption{The tangent lines at $a_i$ and $b_i$.}
\label{fig:tangentLineFig}
\end{subfigure}
\caption{An illustration of a hard instance to compute convex hull.}\label{fig:hard}
\end{figure}
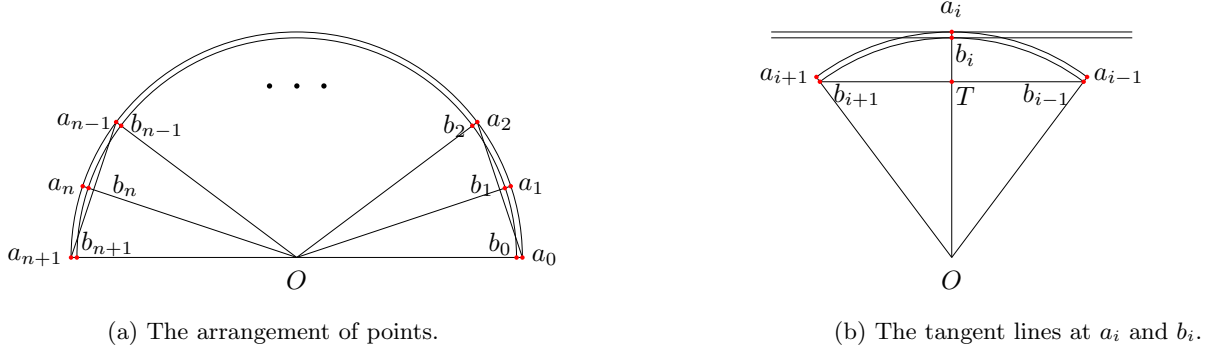


We construct a point set $U$ so that it is hard to compute the convex hull of point set $P = Q \cup \{(1, 0), (-1, 0)\}$
for all $Q \subseteq U$, as illustrated in Figure~\ref{fig:hard}. Let $C_1, C_2$ be concentric half circles. The radius of $C_1$ equals 1 and that of $C_2$ is any value in $(k,1)$ for some $k$ to be determined later. Let $a_0, a_1, \ldots, a_{n+1}$ be points distributed evenly on $C_1$ so that $a_0 = (1, 0)$ and $a_{n+1} = (-1, 0)$. Define $b_0, b_1, \ldots, b_{n+1}$ on $C_2$ similarly. Let $k$ be the distance between the origin $O$ and the line $\overleftrightarrow{a_i a_{i+2}}$ for any $i \in [0, n-1]$. Let $U$ be the set $\{a_i : i \in [1, n]\} \cup \{b_i : i \in [1, n]\}$. 

Before proceeding to the hardness proof, observe the following geometric property of points in $U$. 

\begin{lemma}\label{lem:geo}
For every $Q \subseteq U$, let $R = \ext{Q \cup \{(1, 0), (-1, 0)\}}$. We have that
	\begin{enumerate}[leftmargin=1.5cm,label={(\arabic*)}]
		\item If $a_i \in Q$, then $a_i \in R$. 
		\item If $b_i \in Q$, then $b_i \in R$ iff $a_i \notin Q$.
	\end{enumerate}
\end{lemma}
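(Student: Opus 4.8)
The plan is to establish both parts by elementary convexity arguments based on the specific geometry of the construction: the $a_i$ lie on the outer half-circle $C_1$ of radius $1$, the $b_i$ lie on the inner half-circle $C_2$ of radius in $(k,1)$, and $k$ is chosen to be exactly the distance from the origin $O$ to the chord $\overline{a_i a_{i+2}}$ (which is the same for every $i$ by the even spacing). I will work throughout with the set $P = Q \cup \{(1,0),(-1,0)\} = Q \cup \{a_0, a_{n+1}\}$ and its extreme-point set $R$; note $(1,0) = a_0$ and $(-1,0) = a_{n+1}$ are always present, so every point of $P$ has $x$-coordinate in $[-1,1]$ and the whole configuration sits in the closed upper half-disk of radius $1$.

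For part (1), I would argue that each $a_i$ is a vertex of the convex hull of the \emph{entire} set $U \cup \{a_0, a_{n+1}\}$, hence a fortiori a vertex of $\conv{P}$ for any $Q$ containing it. Concretely: all $n+2$ points $a_0,\dots,a_{n+1}$ lie on the circle of radius $1$ centred at $O$, every other point of $U$ (the $b_j$'s) lies strictly inside that circle since the radius of $C_2$ is less than $1$, and the two base points $a_0,a_{n+1}$ lie on it as well. A point lying on a circle that contains all the other points in its closed disk is automatically extreme in their convex hull — the tangent line to the circle at $a_i$ is a supporting line that touches the set only at $a_i$ (distinctness of the $a_i$'s on a half-circle guarantees no two share a tangent point). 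Since $R = \ext{P}$ and $P \subseteq U \cup \{a_0,a_{n+1}\}$, and extremality is monotone under taking subsets (a point extreme in a superset that still belongs to the subset is extreme in the subset), $a_i \in Q \Rightarrow a_i \in R$.

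For part (2), fix $i$ with $b_i \in Q$. If $a_i \notin Q$, I must show $b_i \in R$: I exhibit a supporting line of $P$ through $b_i$. The natural candidate is the tangent line $T$ to $C_2$ at $b_i$ (drawn in Figure~\ref{fig:tangentLineFig}); all $b_j$'s lie on the closed disk of $C_2$, so $T$ weakly supports the $b$-points, and I need that no $a_j \in P$ lies strictly on the far side of $T$. The only $a_j$'s that could be problematic are the ones "near" direction $i$, and here the choice of $k$ enters decisively: because $b_i$'s distance from $O$ exceeds $k$ = dist$(O, \overline{a_{i-1}a_{i+1}})$, the tangent $T$ at $b_i$ separates $b_i$ from the chord $\overline{a_{i-1}a_{i+1}}$, so $a_{i-1}$ and $a_{i+1}$ lie on $T$ or on the $O$-side of it; by convexity of the $a$-arc all other $a_j$'s lie even further on the $O$-side. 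Hence $T$ supports $P$ and meets it only at $b_i$ (using that $b_i \neq a_j$ for all $j$ and that $T$ is not tangent to $C_2$ at any other $b_j$), so $b_i \in R$. Conversely, if $a_i \in Q$ (so $a_i \in R$ by part (1)), I show $b_i \notin R$ by proving $b_i$ lies in the triangle $\triangle a_{i-1} a_i a_{i+1}$ — equivalently, $b_i$ is below the chord $\overline{a_{i-1}a_i}$, below $\overline{a_i a_{i+1}}$, and above $\overline{a_{i-1}a_{i+1}}$. The last of these is exactly "$b_i$ is at distance $> k$ from $O$ on the correct side", i.e. the defining property of $k$; the first two follow because $b_i$, being radially inward from $a_i$, sits below both chords emanating from $a_i$. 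Since $a_{i-1}, a_i, a_{i+1} \in P$, any point in their convex hull is non-extreme, so $b_i \notin R$. (When $i=1$ or $i=n$ one of the neighbours is a base point $a_0$ or $a_{n+1}$, which is always in $P$, so the argument is unchanged.)

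The main obstacle I anticipate is the quantitative geometry in part (2), first direction: verifying cleanly that the tangent line $T$ to $C_2$ at $b_i$ genuinely separates $b_i$ from every $a_j \in P$, not just from the two immediate neighbours $a_{i\pm1}$. The intended mechanism is that $k$ was \emph{defined} as the common distance from $O$ to the chords $\overline{a_j a_{j+2}}$, so the union of those chords forms a convex polygon inscribed at distance exactly $k$, and $b_i$ at radius $>k$ pokes outside exactly the one chord $\overline{a_{i-1}a_{i+1}}$ whose midpoint direction matches $b_i$; one then needs that the outward normal at $b_i$ (which is $T$'s normal) points into the half-plane not containing any $a_j$. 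I would make this precise by a short angle computation comparing the direction of $b_i$ with the directions of the $a_j$'s and the orientation of $T$, but I expect this to be the only place where genuine (if routine) trigonometry is unavoidable; everything else is soft convexity and subset-monotonicity of extremality.
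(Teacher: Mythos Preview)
Your approach to part~(1) and to the ``$a_i\notin Q \Rightarrow b_i\in R$'' direction of part~(2) is essentially the paper's own: tangent to $C_1$ at $a_i$ for the former, tangent to $C_2$ at $b_i$ for the latter. The ``main obstacle'' you flag is not really one: since $T$ is perpendicular to $\overrightarrow{Ob_i}$ at distance equal to the radius of $C_2$, a point $a_j$ lies on the far side of $T$ iff its projection onto the direction of $b_i$ exceeds that radius, i.e.\ iff $\cos\bigl((j-i)\pi/(n+1)\bigr) > r$; but for $j\ne i$ this cosine is at most $\cos(\pi/(n+1)) = k < r$, so only $a_i$ itself is on the wrong side, and it is excluded by hypothesis.

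There is, however, a genuine gap in your argument for the other direction of part~(2). You write ``Since $a_{i-1}, a_i, a_{i+1} \in P$, any point in their convex hull is non-extreme'', but nothing in the hypotheses guarantees $a_{i-1}\in P$ or $a_{i+1}\in P$; you are only told $a_i\in Q$ and $b_i\in Q$. Placing $b_i$ inside $\triangle a_{i-1}a_ia_{i+1}$ therefore does not by itself show $b_i\notin R$. The paper's fix is immediate: use the triangle $\triangle a_0\,a_i\,a_{n+1}$ instead. The two base points $a_0=(1,0)$ and $a_{n+1}=(-1,0)$ are \emph{always} in $P$, and since $O$ lies on the segment $a_0a_{n+1}$ while $b_i$ lies on the segment $Oa_i$ strictly between $O$ and $a_i$, we have $b_i\in\triangle a_0 a_i a_{n+1}\subseteq\conv{P}$ with $b_i$ distinct from all three vertices, hence $b_i\notin R$.
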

\begin{proof}
(1) Since $a_i$ is on $C_1$, $a_i$ cannot be expressed as a convex combination of any other points in $C_1 \setminus \{a_i\}$. That implies $a_i$ is an extreme point of $Q$ as long as $a_i \in Q$, and the same argument holds for every $i$.

(2 $\Rightarrow$) If $a_i \in Q$, then $b_i \in \mathrm{\Delta}a_0a_ia_{n+1}$ and thus $b_i \notin R$.

(2 $\Leftarrow$) If $a_i \notin Q$, then $b_i \in R$. To see why, we draw a tangent line $L_{b_i}$ of $C_2$ passing through $b_i$ as illustrated in Figure~\ref{fig:tangentLineFig}. Since $r > k$, all the points in $U \setminus \{a_i\}$ are strictly on one side of $L_{b_i}$, implying that $b_i$ cannot be expressed as a convex combination of any other points in $R \setminus \{a_i\}$ for all $R$. Thus, $b_i$ is an extreme point if $a_i \notin Q$.
\end{proof}

Lemma~\ref{lem:geo} implies the fact that, for every $Q \subseteq U$, 
$$
	|\ext{Q \cup \{(1, 0), (-1, 0)\}}| = |Q|+2 
$$
if and only if $a_i$ and $b_i$ are not both contained in $Q$ for each $i$. Given this fact, we are ready to perform a reduction from the \emph{set disjointness} problem (a two-party communication game) to computing the convex hull in the streaming (and W-stream) model. Set disjointness is defined as follows:

\begin{itemize}[leftmargin=1.5cm]
	\item[Given:] Alice has a private $(\alpha n)$-size subset $A$ of $[n]$, and Bob has another private $(\alpha n)$-size subset $B$ of $[n]$ for some constant $\alpha < 1/2$. 
	\item[Goal:\;\,] Answer whether $A$ and $B$ have an non-empty intersection. 
\end{itemize}

Kalyanasundaram and Schintger~\cite{ks92} show that

\begin{theorem}[\tcite]\label{lem:disj}
No matter which 2-way, multi-round protocol Alice and Bob use, they must communicate $\Omega(n)$ bits to answer the set disjointness problem with constant success rate greater than $2/3$.
\end{theorem}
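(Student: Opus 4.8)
The plan is to obtain this as the classical randomized communication lower bound for set disjointness, proved by Yao's minimax principle together with a ``corruption'' (rectangle) argument on a structured hard distribution --- the Razborov-style route to the Kalyanasundaram--Schnitger bound. First I would reduce to distributional complexity: for a constant $\epsilon>0$ let $R_\epsilon$ and $D^\mu_\epsilon$ denote, respectively, the randomized worst-case and the $\mu$-distributional communication complexities of size-$\alpha n$ disjointness at error $\epsilon$. By the easy direction of Yao's minimax principle, $R_\epsilon\ge D^\mu_\epsilon$ for every distribution $\mu$ on inputs; and any protocol whose error is bounded below $1/2$ by a constant --- in particular any protocol of success rate $>2/3$ --- can be amplified to error $\le\epsilon$ by a majority vote over $O(1)$ independent repetitions, so the success-rate-$>2/3$ complexity is $\Omega(R_\epsilon)$. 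It therefore suffices to exhibit a distribution $\mu$ on pairs $(A,B)$ with $|A|=|B|=\alpha n$, and a sufficiently small constant $\epsilon$, for which $D^\mu_\epsilon=\Omega(n)$. Since $\alpha\in(0,1/2)$ is a fixed constant, the construction below can be arranged to use sets of size exactly $\alpha n$, with the hidden constants depending on $\alpha$.

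Next, the hard distribution. Write $\mathcal{D}$ for the set of disjoint pairs $(A,B)$ with $|A|=|B|=\alpha n$, and $\mathcal{I}$ for the pairs of the same sizes with $|A\cap B|=1$. Let $\mu$ place a fixed constant fraction $p\in(0,1)$ of its mass uniformly on $\mathcal{D}$ and the remaining $1-p$ uniformly on $\mathcal{I}$, coupled so that a sample from $\mathcal{I}$ is obtained from a sample from $\mathcal{D}$ by relocating one element of $B$ onto an element of $A$; equivalently, one draws a uniformly random ``frame'' $A\cup B\subseteq[n]$ together with a labelling of each frame element as Alice-only, Bob-only, or shared, subject to the size constraints and to at most one element being shared. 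The promise $|A\cap B|\le 1$ makes the interaction between $\mu$ and an arbitrary combinatorial rectangle rigid, and disjoint versus intersecting instances remain locally indistinguishable after almost all of the frame is revealed.

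The crux is the \textbf{corruption lemma}: there are constants $a,b>0$, depending only on $\alpha$, such that every combinatorial rectangle $R=X\times Y$ satisfies $\mu(R\cap\mathcal{I})\ge a\,\mu(R\cap\mathcal{D})-2^{-bn}$. In words, a product set that is almost entirely disjoint pairs is forced to spill over into intersecting pairs, unless it is exponentially small. The proof is an isoperimetric/counting estimate: fixing $R$, one bounds from below the fraction of universe elements $i$ for which $R$ still admits both $i\in A$ and $i\in B$ once the rest of the frame is conditioned on, and shows this fraction is either a positive constant --- so $R$ meets $\mathcal{I}$ with mass comparable to its mass in $\mathcal{D}$ --- or else $R$ has measure $2^{-\Omega(n)}$. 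This is exactly the point at which a naive union bound or a single entropy inequality fails, and where either Razborov's averaging over coordinates or, alternatively, an information-complexity direct sum reducing $n$-bit disjointness to $n$ independent copies of the two-bit \textsc{And} function must be invoked; I expect this lemma to be the main obstacle.

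Finally, assemble the pieces. A deterministic $c$-bit protocol partitions the inputs into at most $2^c$ output-monochromatic combinatorial rectangles. If it errs with $\mu$-probability at most $\epsilon$, the rectangles labelled ``disjoint'' carry $\mu$-mass at least $p-\epsilon$ from $\mathcal{D}$ and at most $\epsilon$ from $\mathcal{I}$; summing the corruption inequality over just those rectangles gives $\epsilon\ge a(p-\epsilon)-2^{c}\cdot 2^{-bn}$, hence $2^{c}\cdot 2^{-bn}$ exceeds a positive constant once $\epsilon$ is chosen small enough, i.e. $c=\Omega(n)$. This establishes $D^\mu_\epsilon=\Omega(n)$, and with the reductions of the first paragraph it yields the claimed $\Omega(n)$ lower bound for every two-way, multi-round protocol solving size-$\alpha n$ set disjointness with success rate greater than $2/3$.
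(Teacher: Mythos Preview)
The paper does not prove this theorem; it is stated as a cited result from Kalyanasundaram and Schnitger~\cite{ks92} and used as a black box in the reduction establishing Theorem~\ref{thm:lower}. There is therefore no ``paper's own proof'' to compare against.

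That said, your sketch is a recognisable and essentially sound route to the bound --- it is Razborov's corruption argument rather than Kalyanasundaram and Schnitger's original proof, but it yields the same $\Omega(n)$ conclusion. The reduction via Yao's principle and the final rectangle-counting step are correct as written. You are right that the entire weight rests on the corruption lemma, and as it stands your proposal does not prove it: you describe what the lemma says and gesture at ``Razborov's averaging over coordinates'' or an information-complexity direct sum, but neither is carried out. If this were meant to be a self-contained proof rather than a citation, that gap would have to be closed; for the purposes of this paper, however, a citation is all that is required, and your sketch already goes beyond what the paper itself supplies.
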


We are ready to proceed to the proof of Theorem~\ref{thm:lower}.
\begin{proof}[Proof of Theorem \ref{thm:lower}]
We claim that, if there exists a streaming (or W-stream) algorithm that can compute the convex hull using $s$ bits in $p$ passes with constant probability greater than $2/3$, then the set disjointness problem can be answered using $(ps)$ bits with constant probability greater than $2/3$. Hence, $ps = \Omega(n)$ or $p = \Omega(\ceil{n/s})$, noting that $p$ is an integer and cannot be a sub-constant. 

To prove the claim, for every $A, B \in [n]$ we map them to $Q_A, Q_B \subseteq U$, so that $Q_A = \{a_i : i \in A\}$ and $Q_B = \{b_i : i \in B\}$. Then, we use a tape to store the points in $Q_A \cup Q_B \cup \{(1, 0), (-1, 0)\}$, where $Q_A$ occupies the former half of the tape and the rest of points occupies the latter half of the tape. If there exists an algorithm that can compute the convex hull of $R = Q_A \cup Q_B \cup \{(1, 0), (-1, 0)\}$, it must know what $|\ext{R}|$ is. Given the above fact, we have that $A \cap B = \emptyset$ iff $|\ext{R}| = |Q_A| + |Q_B| +2$. This gives us the ability to solve the set disjointness problem. 

Then, we generate four sets of the above arrangement. Let the input of these sets be (1) $A$ and $B$, (2) $\overline{A}$ and $B$, (3) $A$ and $\overline{B}$, and (4) $\overline{A}$ and $\overline{B}$. Instead of distributing the generated points among a half circle, we distribute the generated points among an eighth circle for each set. Then, we concatenate these eighth circles so that they evenly partition a half circle. No matter what $A$ and $B$ are, the total number of extreme points in these sets is $3n$. Since any algorithm that computes convex hull needs to output the extreme points in order, as defined, if we observe the outputted extreme points in the first eighth circle, it suffices to answer the above set disjointness problem, while retaining the number of extreme points to be a fixed value $3n$, given $n$. 

Hence, we are able to reduce the set disjointness problem of domain set $[h]$ to computing the convex hull of $3h$ extreme points, one can place $n-3h$ dummy points to the locations that are very close to origin, and thus all dummy points are interior points no matter what $Q_A$ and $Q_B$ are. This establishes Theorem~\ref{thm:lower}.
\end{proof}

\section{Conditional Lower Bound}\label{sec:clower}

In this Section, we prove a conditional lower bound, higher than the unconditional one shown in Section~\ref{sec:lower} for small $h$. This conditional lower bound holds for those algorithms in the algebraic decision tree model, that is:

\begin{itemize}
	\item What algorithms can store in memory is a subset of input points. 
	\item The only operations that algorithms can perform is to test the sign of any continuous function evaluated on the points currently stored in memory.
\end{itemize}

Theorem~\ref{thm:constanthard} is our conditional lower bound, which implies that for constant $h$ any deterministic streaming algorithm that can compute the convex hull in the algebraic decision tree model requires $O(n^\delta)$ space, if the pass complexity $< h/2$. This lower bound is tight because the Gift-Wrapping algorithm can compute the convex hull using $O(h) = O(1)$ space and $h/2$ passes.

\begin{theorem}\label{thm:constanthard}
To compute the upper hull of $n$ points on a plane which has $h$ extreme points, any deterministic streaming algorithm in the algebraic decision tree model that uses $1/\delta$ passes for any $1/\delta < \min\{h/2, \log_h n\}$ requires $\Omega(n^\delta)$ storage of points.
\end{theorem}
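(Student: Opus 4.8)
\textbf{Proof proposal for Theorem~\ref{thm:constanthard}.}

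The plan is to run an adversary argument against a deterministic streaming algorithm $\mathcal{A}$ that makes $1/\delta$ passes over the input and stores at most $N := n^\delta$ points at any time (with constant factors to be tracked), and to derive a contradiction by forcing $\mathcal{A}$ to ``forget'' an extreme point. The starting configuration is a set of $n$ candidate points in convex position lying on an upper-hull arc, partitioned conceptually into $h$ blocks that will successively be collapsed; the adversary maintains a large family of ``live'' points, all of which are mutually indistinguishable in the sense that the sign tests made so far are consistent with any one of them being the true $i$-th extreme point while all the others are pushed slightly below the hull. Concretely I would place, inside the $i$-th slot of the eventual upper hull, a bundle of roughly $n/h$ collinear-nearly candidates arranged so that an algebraic-decision-tree algorithm cannot tell which one survives without having it in memory, mimicking the ``fooling set'' geometry used in Lemma~\ref{lem:geo} and Figure~\ref{fig:hard}, but with a hierarchy of scales so the argument can be iterated across the $1/\delta$ passes.

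The core of the argument is a counting/potential bound tracked over passes. After pass $j$, let $L_j$ denote the size of the set of input positions about which $\mathcal{A}$ is still ``uncommitted'' for a fixed hull slot, i.e. whose membership in $\ext{P}$ is not yet determined by $\mathcal{A}$'s memory contents and output so far. Initially $L_0 = \Theta(n/h)$ — but more usefully, working with the full product across the $h$ slots, the number of adversary-consistent inputs is about $(n/h)^{h}$, or after restricting to the subproblem we care about, $\Theta(n)$ (as in the unconditional bound of Section~\ref{sec:lower}, which already gives the $\Omega(h/s)$ passes statement). During a single pass, since $\mathcal{A}$ holds only $N$ points in memory, it can ``resolve'' the fate of at most $N$ of the live positions without the adversary being able to re-route: for every live position not currently in memory, the adversary retains the freedom to make it extreme or non-extreme. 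The key inequality I want is that one pass shrinks the live set by at most a factor polynomial in $N$, so that after $1/\delta$ passes the number of consistent inputs is at least $\Theta(n) / N^{O(1/\delta)}$, which is still $\ge 2$ provided $N^{c/\delta} < \Theta(n)$, i.e. provided $n^{c\delta \cdot (1/\delta)} = n^{c}$ stays below $n$ — this is exactly where the hypothesis $1/\delta < \log_h n$ and the target $N = \Omega(n^\delta)$ are forced to be tight, and where the constant $c$ (related to branching of the decision tree per stored point and the block count $h$) must be pinned down. Two consistent inputs means $\mathcal{A}$ outputs the same upper hull on two inputs with different $\ext{P}$, contradicting correctness.

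The constraint $1/\delta < h/2$ enters because the adversary needs at least two ``free'' slots available throughout: with $h$ hull vertices and a Gift-Wrapping-style algorithm able to peel one vertex per half-pass, an algorithm making $h/2$ passes legitimately determines the whole hull one vertex at a time; below that threshold there is always some slot the adversary has not been pinned on, and the hierarchical construction places the iterated ``fooling bundles'' precisely in those surviving slots — at depth $j$ of the recursion the bundle sizes are $\Theta(n / h^{j})$ style geometric, so $\log_h n$ levels of refinement are available, matching the pass budget $1/\delta < \log_h n$. I would formalize the recursion as: the hard instance for $(h, \delta)$ consists of $h$ copies of a rescaled hard instance for the one-slot problem, glued along a common upper arc (reusing the eighth-circle concatenation trick from the proof of Theorem~\ref{thm:lower} so the global hull vertex count is a fixed function of $n$ and $h$ regardless of adversary choices), and the adversary plays the single-slot game independently in each copy; an algorithm with $1/\delta$ passes and $o(n^\delta)$ memory fails in at least one copy by the per-slot counting bound.

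\textbf{Main obstacle.} The delicate point is the per-pass shrinkage bound: I need to show that holding $N$ points in memory, plus whatever the algebraic decision tree branches on, lets $\mathcal{A}$ eliminate only $N^{O(1)}$ (not, say, $2^{N}$ or $n^{1/2}$) live candidates per pass, \emph{and} that this elimination cannot be ``parallelized'' across slots in a way that beats the product bound. Because the model allows testing the sign of an arbitrary continuous function of the stored points, a single test can in principle split the live set of a slot into two, so naively a pass with $N$ stored points and $\mathrm{poly}(n)$ tests could binary-search and resolve a slot completely. The fix must be geometric: the fooling bundle in each slot must be designed (via the tangent-line separation argument of Lemma~\ref{lem:geo}, pushed to $n/h$ near-collinear points) so that distinguishing which bundle member is extreme genuinely \emph{requires that member to be in memory} when it is read — any sign test on a set of stored points not including it returns the adversary-controlled answer. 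Nailing this ``you must hold it to resolve it'' lemma, and then converting it into the clean recursive/product accounting above, is the crux; the rest is bookkeeping of constants to land exactly on $\Omega(n^\delta)$ under $1/\delta < \min\{h/2, \log_h n\}$.
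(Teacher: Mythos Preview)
Your proposal correctly identifies the decisive obstacle but does not overcome it, and the high-level architecture (an $h$-slot product/counting bound) is not the one that works here. The paper's argument differs in both respects.

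First, the structure. The paper does not run $h$ parallel fooling games in slots; instead it proves a single-pass reduction lemma: if the algorithm stores fewer than $s$ points, then against a suitable adversary one pass turns the problem $S(p_L,p_R,D,n,h)$ into an equivalent subproblem $S(p'_L,p'_R,D',\lceil(n-2)/s\rceil-1,\,h-2)$, where $D' \subseteq D$ is a smaller open disk and $p'_L,p'_R$ are two points appended at the end of the pass that become the new hull endpoints. Iterating this lemma $1/\delta$ times, the subproblem still has positive size provided $s^{1/\delta} < n$ (forcing $s = \Omega(n^\delta)$) and still has positive hull complexity provided $h - 2/\delta > 0$; this is where $1/\delta < h/2$ enters, not a Gift-Wrapping half-pass heuristic.

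Second, and more importantly, the mechanism that defeats arbitrary continuous sign tests is not a geometric ``you must hold it to resolve it'' property of a fixed bundle of candidate points. As you suspect, no fixed finite bundle can have that property: a single sign test can separate any two concrete configurations. The paper's device is \emph{topological}: the adversary never commits to concrete coordinates during the pass. It chooses $s$ ``template'' points $p_1,\dots,p_s$ whose positions are parameters ranging over an open non-empty set $U \subset \mathbb{R}^{2s}$ (chosen so that $p_L,p_1,\dots,p_s,p_R$ is a strictly concave chain inside $D$, and each $p_i$ admits witnesses $q_i^L,q_i^R \in D$ making $p_L,q_i^L,p_i,q_i^R,p_R$ the upper hull of $\{p_L,p_1,\dots,p_s,p_R,q_i^L,q_i^R\}$). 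The first $n-2$ stream elements are all copies of templates; at each step the adversary feeds some $p_i$ not currently in memory, which is possible since memory holds fewer than $s$ points. Whenever the algorithm evaluates a sign test, the test function is continuous in the parameters, so $U$ decomposes into open pieces by sign; the adversary replaces $U$ by any non-empty open piece and continues. At the end of the pass $U$ is still open and non-empty, so the adversary now fixes concrete coordinates; by pigeonhole some template $p_k$ was fed at least $\lceil(n-2)/s\rceil-1$ times while never in memory, and because $U$ is open there is an open disk $D_0 \ni p_k$ such that moving those copies anywhere in $D_0$ leaves every sign test unchanged. Finally the adversary appends $p'_L := q_k^L$, $p'_R := q_k^R$, shrinks $D_0$ to $D'$ lying above $\overline{p'_L p'_R}$ and below $\overline{p_L p'_L}$, $\overline{p'_R p_R}$, and hands the undetermined copies (now free to roam in $D'$) to the next pass as the new instance.

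This open-set refinement is exactly the idea your proposal is missing: it converts ``the algorithm may make arbitrarily many sign tests'' into ``the algorithm can only shrink an open parameter set, which stays open and non-empty,'' so the adversary retains full freedom over the copies of $p_k$ regardless of how many tests were made. Your counting-of-consistent-inputs framework is inherently about a finite set of candidate inputs and cannot survive unbounded-branching sign tests; the paper sidesteps this by working over a continuum of inputs throughout the pass.
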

\begin{proof}
Given two points $p_L$, $p_R$ and an open disk $D$ that lies above the line $p_L p_R$ and between two vertical lines $x = x_{p_L}$ and $x = x_{p_R}$, consider the problem that computes the upper hull of a $n$-point set $P \subseteq D$, where $P$ satisfies that $U(P \cup \{p_L, p_R\}) = U(P) \cup \{p_L, p_R\}$ and $U(P) = h$. We denote this problem as $S(p_L, p_R, D, n, h)$. The following lemma states that after each pass, the problem $S(p_L, p_R, D, n, h)$ remains as difficult as some problem $S(p^\prime_L, p^\prime_R, D^\prime, (n-2)/s, h-2)$. By setting $s \approx n^\delta$ and applying the lemma repeatedly, we have our theorem. 
\end{proof}

\begin{lemma}
Given a problem $S(p_L, p_R, D, n, h)$, for any deterministic streaming algorithm that has storage less than $s$ points, there exists a sequence of $n$ points $P$ inside $D$, a subset $X \subseteq P$, two points $p^\prime_L, p^\prime_R \in P$ and an open disk $D^\prime \subseteq D$, such that after we run the first pass of the algorithm, we have:
\begin{itemize}
	\item no point of $X$ is in memory but $p^\prime_L$ and $p^\prime_R$;
	\item the result of the pass would be identical if we move the points in $X$ to the arbitrary point in $D^\prime$;
	\item the upper hull of $X \cup \{p^\prime_L, p^\prime_R\}$ is equal to the upper hull of $P$;
	\item $|X| = \ceil{(n - 2)/ S} - 1$.
\end{itemize}
\end{lemma}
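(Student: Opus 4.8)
The plan is to run a pigeonhole / adversary argument on the first pass, very much in the spirit of the classical streaming lower bounds for selection (Munro–Paterson) adapted to the geometric setting. Fix a deterministic streaming algorithm that uses storage for fewer than $s$ points and consider the problem instance $S(p_L, p_R, D, n, h)$. I will construct the bad input $P$ adaptively: feed the algorithm points one by one, all lying inside $D$, and watch which points it retains in memory. The key point is that between two consecutive ``checkpoints'' the algorithm can hold at most $s-1$ of the intervening points (it must keep $p_L, p_R$, or rather their eventual analogues, but more importantly it is storage-bounded), so if we break the stream of the $n-2$ non-endpoint points into blocks, at least one block of size about $\ceil{(n-2)/s}$ has the property that after the pass is complete, \emph{no} point of that block sits in memory. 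That block, minus one point which we will promote to an endpoint, becomes $X$.

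The steps, in order. First, I would place $n-2$ points of $P$ inside $D$ in a long, nearly-flat convex arc — close enough to a common circular arc that \emph{any} subset of them in convex position forms an upper hull, and spread out along the $x$-axis — plus the two endpoints $p_L, p_R$ chosen so that $U(P\cup\{p_L,p_R\}) = U(P)\cup\{p_L,p_R\}$; this is exactly the structure preserved by the recursion, so the inductive invariant makes sense. Second, I would partition the $n-2$ arc points, in stream order, into $\ceil{(n-2)/(\ceil{(n-2)/s}-1)}$ consecutive groups, arranged so there are enough groups that, since memory holds $<s$ points at the end of the pass, some group $Y$ is entirely absent from memory; I take $X$ to be $Y$ minus its first and last point in $x$-order, and let $p'_L, p'_R$ be those two extreme points of $Y$. (The exact book-keeping that makes $|X| = \ceil{(n-2)/s}-1$ come out right is routine once the block size is chosen; I would not grind it here.) Third, I would choose the open disk $D'$: since $X$ lies strictly between $p'_L$ and $p'_R$ in $x$-coordinate, strictly above the chord $p'_Lp'_R$, and strictly inside $D$, and since the algebraic-decision-tree model only lets the algorithm test signs of continuous functions of the points it has \emph{seen and kept}, there is an open neighborhood — a small disk $D'\subseteq D$ above $p'_Lp'_R$ and between the verticals through $p'_L,p'_R$ — into which every point of $X$ may be perturbed without flipping any sign the algorithm evaluated during the pass (the algorithm never had those points in memory at the moment any such test was made, so the test's outcome did not depend on their precise coordinates). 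Fourth, I would verify the last invariant: $U(X\cup\{p'_L,p'_R\}) = U(P)$, which holds because the arc points are in convex position so the upper hull of the whole arc is the whole arc (together with $p_L,p_R$), and restricting attention to the sub-arc $Y = X\cup\{p'_L,p'_R\}$ gives precisely the contiguous stretch of that hull — the surrounding points of $P$ neither dominate nor are dominated in a way that changes this contiguous piece, by the chosen placement. Counting extreme points, $U(X\cup\{p'_L,p'_R\})$ has $h' = |X| + 2$ extreme points with $|X| = \ceil{(n-2)/s}-1$, matching $S(p'_L,p'_R,D',(n-2)/s, h-2)$ as the theorem's proof requires (up to the off-by-a-small-constant that the ceilings absorb).

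The main obstacle, and the part deserving real care, is the third step — establishing that the whole of $X$ can be relocated to \emph{any} common point of $D'$ while leaving the pass's behavior identical. One has to argue that the algorithm's state after the pass is a function only of the sign-vector of the tests it performed, and that each such test, at the time it was performed, read only points currently in memory; since we arranged that no point of $X$ is ever simultaneously in memory with such a test firing — here I need the blocks to be \emph{consecutive} in stream order and the ``absent from memory'' condition to hold not just at the end but throughout, or else I re-choose $X$ to be a block that the algorithm evicts and never re-admits (it cannot re-admit, since the stream is read-only within a pass and moves only forward) — the test outcomes are insensitive to the exact positions of $X$'s points within any sufficiently small common disk, and by continuity and finiteness of the tests a single $D'$ works for all of them. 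The subtlety is purely that the algorithm \emph{might} have glimpsed a point of $X$ earlier and based a test on it before discarding it; the fix is that within one pass a discarded point is gone forever, so it suffices to pick a block all of whose points are already evicted by the time the \emph{earliest} test depending on any later-seen point is made — and because there are more blocks than the $<s$ memory slots, such an ``eviction-clean'' block exists. Once this is nailed down, iterating the lemma $1/\delta$ times with $s\approx n^\delta$ drives $n$ down by a factor of roughly $s$ each pass while $h$ drops by $2$, and the process can only terminate (i.e. the instance can only become trivial) after $h/2$ passes or after $\log_s n \approx \delta^{-1}$ passes, which is exactly the stated bound.
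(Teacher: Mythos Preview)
Your construction has two genuine gaps. First, the third bullet fails outright: you place all $n-2$ non-endpoint points on a convex arc, so the upper hull of $P$ is the entire arc, whereas $X\cup\{p'_L,p'_R\}$ is a sub-arc of size roughly $(n-2)/s$; these cannot have the same upper hull. Your step~4 conflates ``the portion of $U(P)$ lying between $p'_L$ and $p'_R$'' with $U(P)$ itself. The paper achieves the required equality by a completely different device: the adversary feeds repeated copies drawn from only $s$ distinct candidate locations $p_1,\ldots,p_s$ (forming a concave chain with $p_L,p_R$), and as the \emph{last} two stream elements appends special points $q_k^L,q_k^R$ (these become $p'_L,p'_R$) chosen so that the upper hull of the entire $P$ collapses to $\{q_k^L,p_k,q_k^R\}$. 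Since $X$ consists of the copies of $p_k$, both $U(P)$ and $U(X\cup\{p'_L,p'_R\})$ equal this three-point set.

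Second, your argument for the second bullet does not go through, and the fix you sketch is insufficient. The algorithm may test each point of $X$ immediately upon reading it, against whatever else is then in memory; those test signs depend on the point's coordinates, so ``absent from memory at the end of the pass'' (which is all your pigeonhole yields) does not make the pass's behaviour insensitive to where the points of $X$ sit. There is no reason an ``eviction-clean'' block in your stronger sense should exist. The paper's essential idea here is to keep the $s$ candidate locations as \emph{free variables} ranging over an open set $U\subset\mathbb{R}^{2s}$: each time the algorithm performs a sign test, the adversary shrinks $U$ to an open nonempty subset on which that (continuous) test has constant sign. After the pass $U$ is still open, so the adversary can fix concrete positions and then extract an open disk around $p_k$ into which each copy may be moved without altering any test outcome. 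This is combined with the adaptive feeding rule ``always feed a location not currently stored'' (possible because fewer than $s$ points fit in memory), which guarantees that at the moment the adversary stops feeding $p_k$, no copy of $p_k$ is in memory. Both ingredients --- the small pool of variable locations fed adaptively, and the open-set refinement deferring the choice of coordinates until after all tests are resolved --- are missing from your proposal and are what make the lemma work.
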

\begin{proof}
Consider $s$ points $p_1=(x_1, y_1), \ldots, p_s=(x_s, y_s)$ such that $p_1, \ldots, p_s \in D$, $p_1, \ldots, p_s$ and $p_L, p_R$ form a strictly concave chain, and for each $p_i$, there exists $q_i^L, q_i^R \in D$ such that $p_L, q_i^L, p_i, q_i^R, p_R$ forms the new upper hull above the concave chain $p_L, p_1, \ldots, p_s, p_R$. In other words, these five points are the upper hull of the point set $p_L, p_1, \ldots, p_s, p_R, q_i^L, q_i^R$. Let $U$ be the set containing all tuples $(x_1, y_1, \ldots, x_s, y_s)$ for such choices of $p_1, \ldots, p_s$. Note that $U$ is open and non-empty. To generate the first $n-2$ points in the stream, the adversary would choose the points only from $p_1, \ldots, p_s$. At every step, he picks $p_i$ such that it is currently not stored in the memory. Because the memory can only store $o(s)$ points, such $p_i$ always exists. By pigeonhole principle, there exists some $p_k$ such that it is chosen by the adversary $\ceil{(n-2)/s} - 1$ times. The adversary would stop to pick such $p_k$ when he is about to choose $p_k$ at $\ceil{(n-2)/s}$-th time but to choose any other $p_i$, no matter it is in memory or not. These $\ceil{(n-2)/s} - 1$ copies of $p_k$ are filled into $X$. Therefore $X$ is not stored in the memory and $|X| = \ceil{(n-2)/s} - 1$. 
To satisfy the second condition, during the execution of the algorithm, whenever a test is conducted, we consider the sign of the test function over all the possible tuples in $U$; if not all choices result the same sign, we refine $U$ to be a smaller open and non-empty set in which they do. Then, after the algorithm processes $n-2$ points, because $U$ is open, we can fix the choices of $p_1, \ldots, p_s$ and find an open disk $D_0$ around $p_k$, such that if any copy of $p_k$ is replaced by a point in $D_0$, the outcome of the tests is still the same. Finally, $q_k^L$ and $q_k^R$ are added into the stream as $p_L^\prime, p_R^\prime$. We refine $D_0$ further by finding a smaller open disk $D^\prime \subseteq D_0$ such that it is below two lines $p_L p_L^\prime$ and $p_R p_R^\prime$ and above the line $p_L^\prime p_R^\prime$. Therefore the third condition is satisfied and we are done.  
\end{proof}

\section{Omitted Proofs}\label{sec:proof}

\begin{proof}[Proof of Corollary \ref{cor:diam}]

For the diameter, we invoke two instantiations of convex hull W-stream algorithms simultaneously. One reports the extreme point starting at $p_\ell$ (leftmost) and the other starts at $p_r$ (rightmost). They both output the points in the clockwise order, i.e., one outputs the upper hull and the other output the lower one. After the execution of the algorithms,the first $s/2$ points starting from $p_\ell$ in the upper hull and first $s/2$ points starting from $p_r$ in the lower hull are loaded into the memory using one pass. Then we simulate Shamos' method by rotating the initial caliper $\overline{p_\ell p_r}$. When the $s/2$ points in the upper hull or the lower hull are processed, we using another pass to load the next $s/2$ points for the upper hull and the lower hull. We stop when all the extreme points on the convex hull are processed. Note that, it only requires $O(h/s)$ passes, $O(s)$ space and $O(h)$ time to do so. 

For the minimum enclosing rectangle, four instantiations of our W-stream algorithms are needed as we are rotating two orthogonal calipers. Then similar approach applies to execute the rotating calipers method detailed in~\cite{t83}. The complexities remain the same. This establishes Corollary~\ref{cor:diam}.
\end{proof}

\input{main.ref}

\end{document}